\newcommand{\email}[1]{\protect\href{mailto:#1}{#1}}
\def \FigPath {./figs/}
\newtheorem{theorem}{Theorem}[section]
\newcommand{\N}{\mathbb{N}}
\newcommand{\C}{\mathbb{C}}
\newcommand{\expon}[1]{\mathrm{e}^{#1}}
\newcommand{\imag}{\mathrm{i}}
\newcommand{\norm}[1]{\left\lVert#1\right\rVert}
\newcommand{\abs}[1]{\left\lvert#1\right\rvert}
\newcommand{\cardinality}[1]{\abs{#1}}
\newcommand{\spaceimdim}[1]{\C^{{#1}\times {#1}}}
\newcommand{\spaceim}{\spaceimdim{n}}
\newcommand{\spacefrdim}[1]{\C^{#1}}
\newcommand{\spacefr}{\spacefrdim{m}}
\newcommand{\fourier}{\mathcal{F}}
\newcommand{\mriop}{\mathcal{A}}
\newcommand{\pseudoinv}[1]{{#1}^\dagger}
\newcommand{\mriopinv}{\pseudoinv{\mriop}}
\newcommand{\recon}{\mathrm{rec}_{\mathrm{TV}}}
\newcommand{\reconeta}[1]{\recon{\left({#1}; \eta\right)}}
\newcommand{\tira}{\mathrm{Tira}}
\newcommand{\tirafl}{\mathrm{TiraFL}}
\newcommand{\mask}{\Omega}
\newcommand{\maskop}{\mathrm{P}_{\mask}}
\newcommand{\maskopn}{\mathrm{P}_{\mask_n}}
\newcommand{\maskopinv}{\mathrm{P}_{\mask}^\dagger}
\DeclareMathOperator*{\argmax}{arg\,max}
\DeclareMathOperator*{\argmin}{arg\,min}
\title{Localized adversarial artifacts for compressed sensing MRI}
\author{
Rima Alaifari\thanks{
                Seminar for Applied Mathematics, Department of Mathematics, ETH Z\"urich (\email{rima.alaifari@math.ethz.ch}, \email{tandri.gauksson@math.ethz.ch}).
                }
\and Giovanni S. Alberti\thanks{
                MaLGa Center, Department of Mathematics, University of Genoa (\email{giovanni.alberti@unige.it}).
                }
\and Tandri Gauksson\footnotemark[1]
}
\date{}
\begin{document}

\maketitle

\setcitestyle{authoryear,round,semicolon}

\begin{abstract}
As interest in deep neural networks (DNNs) for image reconstruction tasks grows, their reliability has been called into question
\citep{antun2020instabilities,gottschling2020troublesome}.
However, recent work has shown that, compared to total variation (TV) minimization, when appropriately regularized, DNNs  show similar robustness to adversarial noise in terms of \(\ell^2\)-reconstruction error
\citep{genzel2020solving}.
We consider a different notion of robustness,
using the \(\ell^\infty\)-norm,
and argue that localized reconstruction artifacts are a more relevant defect than the \(\ell^2\)-error.
We create adversarial perturbations to undersampled magnetic resonance imaging measurements (in the frequency domain) which induce severe localized artifacts in the TV-regularized reconstruction.
Notably, the same attack method is not as effective against DNN based reconstruction.
Finally, we show that this phenomenon is inherent to reconstruction methods for which exact recovery can be guaranteed, as with compressed sensing reconstructions with $\ell^1$- or TV-minimization.
\end{abstract}

\setcitestyle{numbers,square,comma}

\section{Introduction}

Following the success of deep learning in computer vision, deep neural networks (DNNs) have now found their way to a wide range of imaging inverse problems \cite{mccann2017review,AMO19,ongie2020deep}.
In some applications, learning the distribution of images from data is the only option.
In others, existing methods based on hand-crafted priors are well established.
Magnetic resonance imaging (MRI) reconstruction, for which sparsity-based methods have been highly successful, is an example of the latter \cite{lu2008csmri}.
However, recent work suggests that image quality can be improved and computation times shortened significantly by the use of DNNs in MRI reconstruction
\cite{chen2022ai}.

At the same time, it is well known that DNNs trained for image classification admit so-called adversarial examples---images that have been altered in minor but very specific ways to change the label predicted by the network \cite{biggio2013evasion, szegedy2013intriguing}.
In \cite{antun2020instabilities}, it was discovered that DNNs used in inverse problems (MRI and computed tomography) exhibit similar behavior.
Namely, the authors show that perturbing the measurements slightly can lead to undesirable artifacts in the image reconstructed by the network and that the same perturbations do not cause problems for state-of-the-art compressed sensing methods.
On the other hand, \cite{genzel2020solving} shows quantitatively that DNNs can be made robust, to a comparable level with total variation (TV) minimization, by injecting statistical noise to the measurement data during training.
Here, robustness is measured by the mean relative reconstruction error as a function of relative (adversarial or statistical) noise level, where both are defined by the \(\ell^2\)-norm.
While this training procedure (known as ``jittering'') increases stability,
it comes at the cost of accuracy, as the recovery of out of distribution features is compromised \cite{colbrook2022difficulty,genzel2020solving}.
This trade-off can be controlled by tuning the noise level used at training time
\cite{genzel2020solving}.

Although DNNs and TV-regularized reconstruction behave similarly to adversarial perturbations from this quantitative perspective, the reconstruction artifacts seen in \cite{genzel2020solving} are qualitatively very different.
TV-minimization suffers from global degradation of image quality (due to staircasing effects), while DNNs tend to ``hallucinate'' new meaningful features to the image.
The latter type of artifact is arguably worse, but it is only severe at relatively high noise levels.
Moreover, the introduced features are already visible in the adversarial noise and are therefore enhanced, but not created, by the network.

In the current work, we focus on the two-dimensional compressed sensing MRI problem \cite{lu2008csmri} and aim to create adversarial perturbations for TV-regularization that result in more localized reconstruction artifacts
as displayed in Figure \ref{fig:mask_structpert}.
To this end, we simply modify the attack method used in \cite{genzel2020solving} by replacing the loss function with a weighted seminorm, with a weight vector supported on a targeted location.
Several locations are targeted, and the one on which the largest artifact appears is selected.
Our experimental results show that such localized adversarial perturbations for TV-regularization do exist, and their \(\ell^\infty\)-norm is directly related to the subsampling factor of the MRI operator.
This attack method has a milder effect when applied to DNNs.

The resulting artifacts for TV-regularization are often manifested as isolated spikes in the image.
These spikes should be interpreted as toy artifacts (as opposed to the hallucinations of a DNN) but can be generalized to more realistic features (see the discussion at the end of Section \ref{sec:maths}).
We provide a mathematical justification for the appearance of such spikes, based on the theory of compressed sensing. Curiously enough, the same positive compressed sensing results on undersampled MRI guaranteeing exact recovery of sparse signals have a negative counterpart regarding the appearance of sparse artifacts. While the analysis we provide is in the context of MRI, the motivations behind this phenomenon are more general and are mainly due to the fact that the forward operator in undersampled MRI has a nontrivial kernel (see also \cite{gottschling2020troublesome}), which happens in many inverse imaging problems.

\begin{figure}
\centering
\includegraphics[scale=0.65]{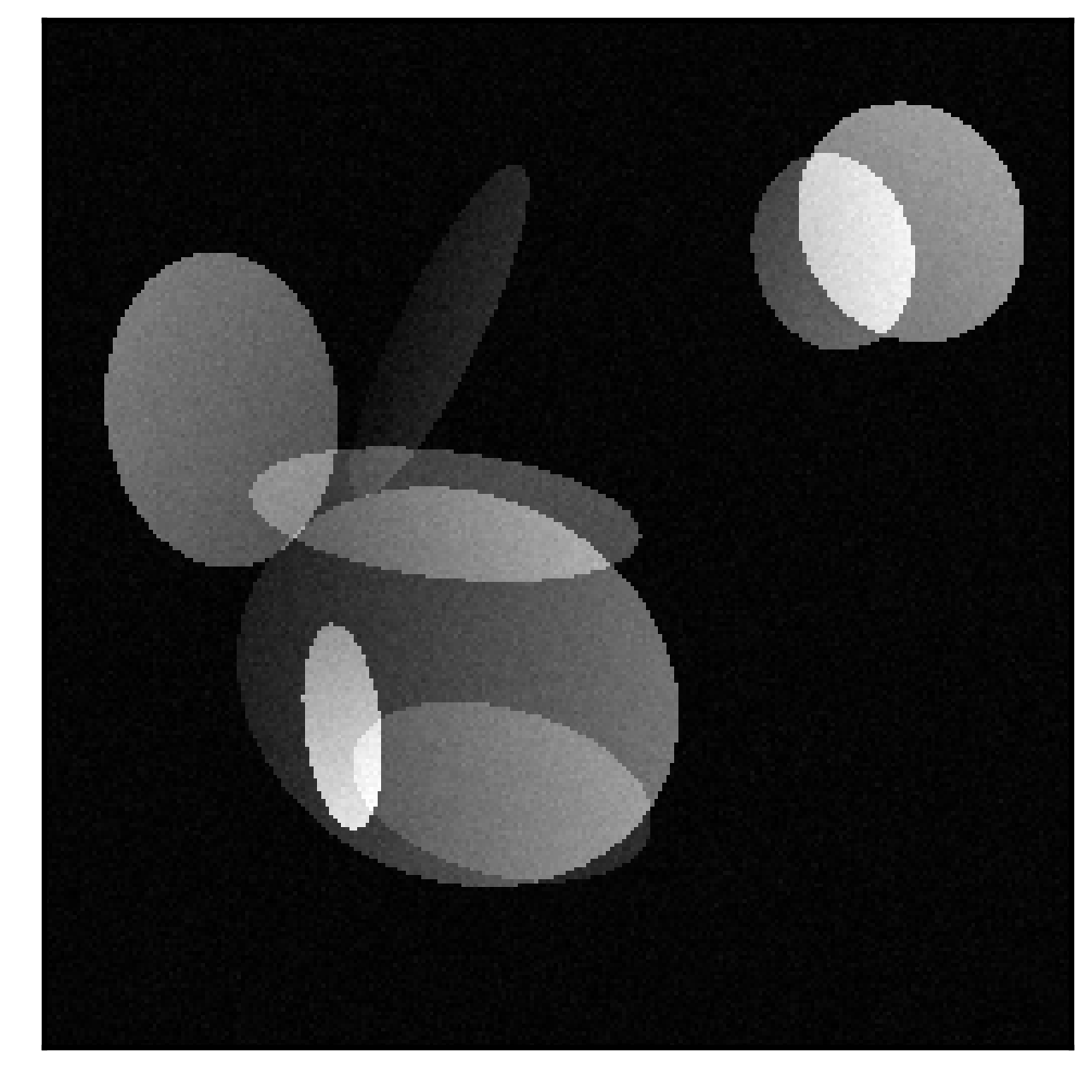}
\includegraphics[scale=0.65]{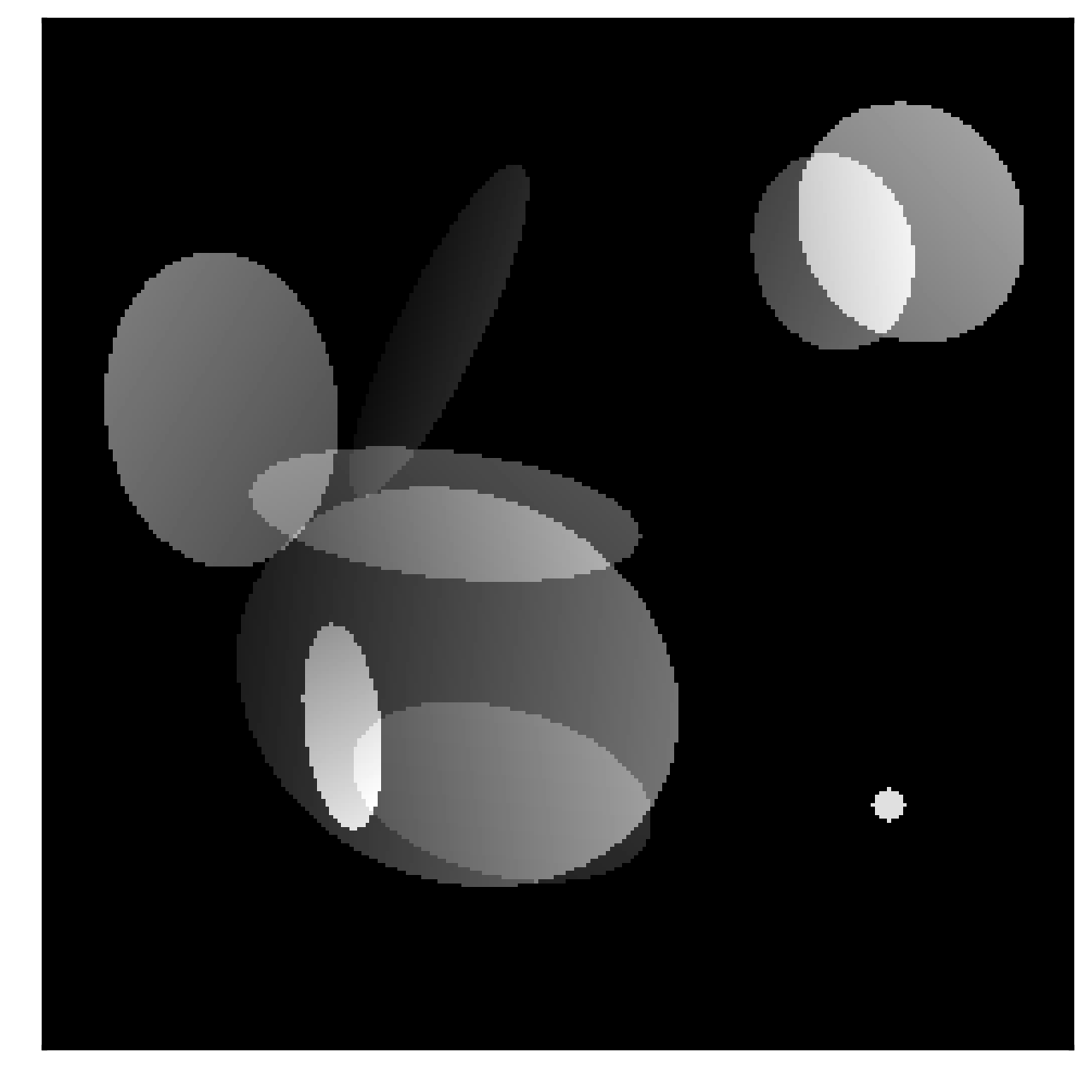}
\caption{
The same image corrupted by two different perturbations of equal \(\ell^2\)-norm (\(10\%\) of the norm of the original image): Gaussian noise (left) and a potentially meaningful feature (right).
The goal of this work is to find small adversarial perturbations to MRI measurements such that artifacts in the spirit of the latter appear in the reconstruction process.
}
\label{fig:mask_structpert}
\end{figure}

We note that in regularized reconstruction for imaging applications, the standard benchmark for stability is, and always has been, a quantitative one described by a norm estimate in some chosen (appropriate) norm. At the same time, it is interesting to see the influence of the adversarial attacks research on DNN-based image reconstruction. It seems a valid argument to point out that suitable attacks can lead to artifacts that are qualitatively relevant; after all, the objects of interest are images. So to us, this also opens the discussion about benchmarks for classical regularization algorithms.  One might argue that in that case, qualitative benchmarks (which are of course harder to define) should also play a relevant role.
It is natural to wonder whether it is possible to design a measure of the perturbations that could be both quantitatively computable and qualitatively meaningful.
Furthermore, future work could generalize the experimental results and the mathematical insights of this paper to different kinds of artifacts and to more general inverse problems.

The paper is structured as follows:
In Section~\ref{sec:csmri} we introduce the compressed sensing MRI model,
in Section~\ref{sec:adv} we formulate our adversarial attack method,
in Section~\ref{sec:results} we present the results of our numerical experiments,
and
in Section~\ref{sec:maths} we give a mathematical explanation for our observations.

The code for the experimental part of the paper builds on the code from \cite{genzel2020solving} and is available at \url{https://gitlab.math.ethz.ch/tandrig/localadvmri}.

\section{Compressed sensing MRI}
\label{sec:csmri}
The goal of MRI is to recover an object's density from its Fourier coefficients.
In the fully sampled case, this problem is readily solved by applying the inverse Fourier transform.
However, acquisition times can be reduced significantly by undersampling along non-Cartesian trajectories in the frequency domain.
This leads to an underdetermined linear system, and it is clear that some additional assumptions need to be made on the object density in order to get a good reconstruction
\cite{lu2008csmri}.

Let us model the cross section of an object by an image of resolution \(n\times n\) for some \(n\in\N\).
Let \(\fourier\colon\spaceim\to\spaceim\) be the two-dimensional discrete Fourier transform, and let \(\maskop\colon\spaceim\to\spacefr\) be a projection onto a set of \(m\geq 1\) indices, \(\mask\subseteq\{(i,j)\colon i,j=1,\ldots,n\}\).
The forward operator of the subsampled MRI problem is \(\mriop=\maskop\circ\fourier\),
and the measured values corresponding to an image \(x\in\spaceim\) are
\begin{equation}
    \label{eq:noisymeasurements}
	y = \mriop x + \epsilon,
\end{equation}
where \(\epsilon\) is zero-mean random noise which we assume is bounded in \(\ell^2\)-norm by some constant \(\eta>0\).
In order to recover \(x\) from the measurement vector, we must search for an image \(z\) with \(\norm{\mriop z - y}_2 \leq \eta\).
Since \(m<n^2\), the equation \(y=\mriop z\) does not have a unique solution.
The pseudoinverse \(\mriopinv = \fourier^{-1}\circ\maskopinv\) certainly provides a solution, but the resulting image may be of low quality or exhibit severe aliasing artifacts (depending on \(\mask\)) and not at all represent the desired object's density (see Figure \ref{fig:mask_radial40}).
Instead, we must impose some conditions on \(z\) based on our a priori knowledge of \(x\).

\begin{figure}
\centering
\includegraphics[scale=0.65]{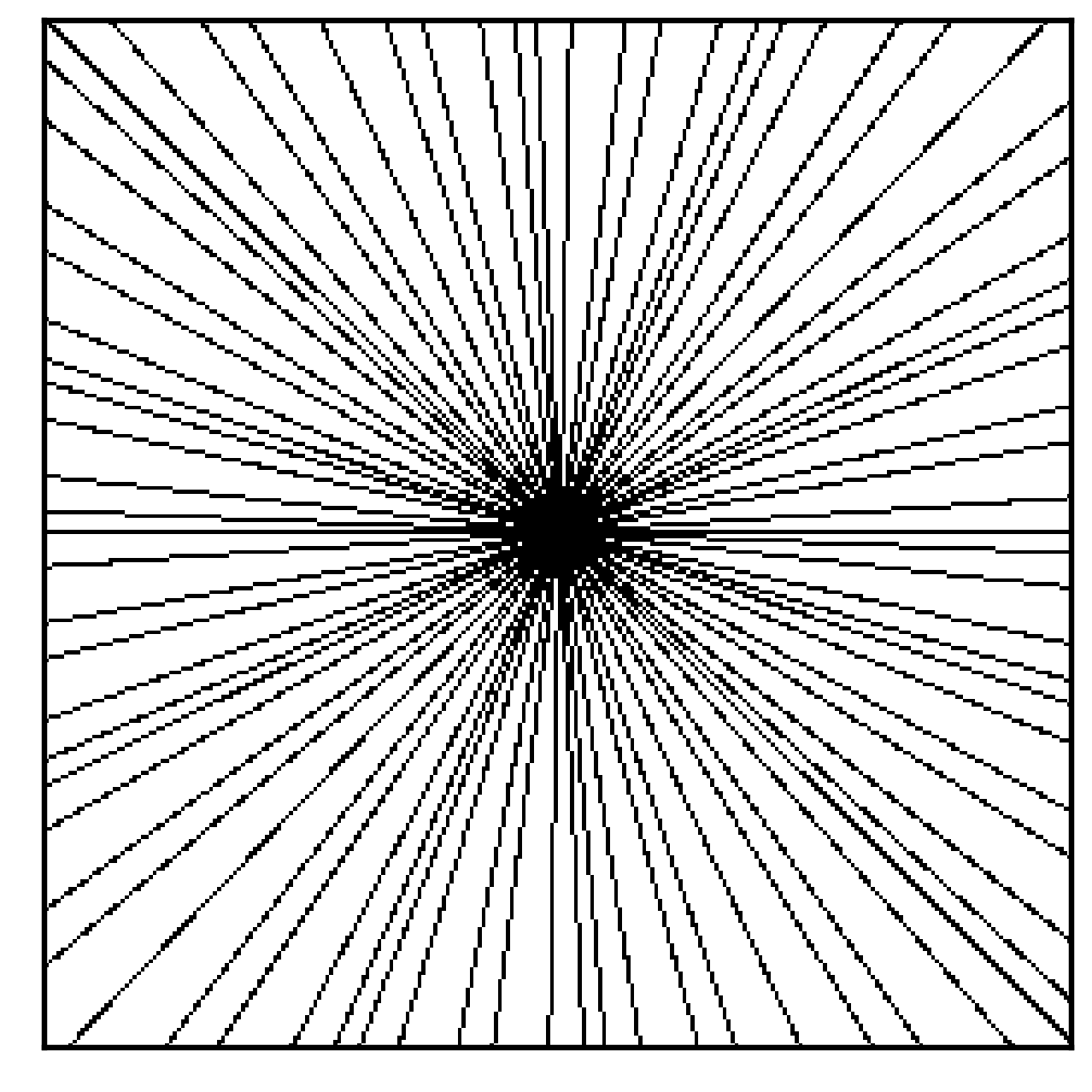}
\includegraphics[scale=0.65]{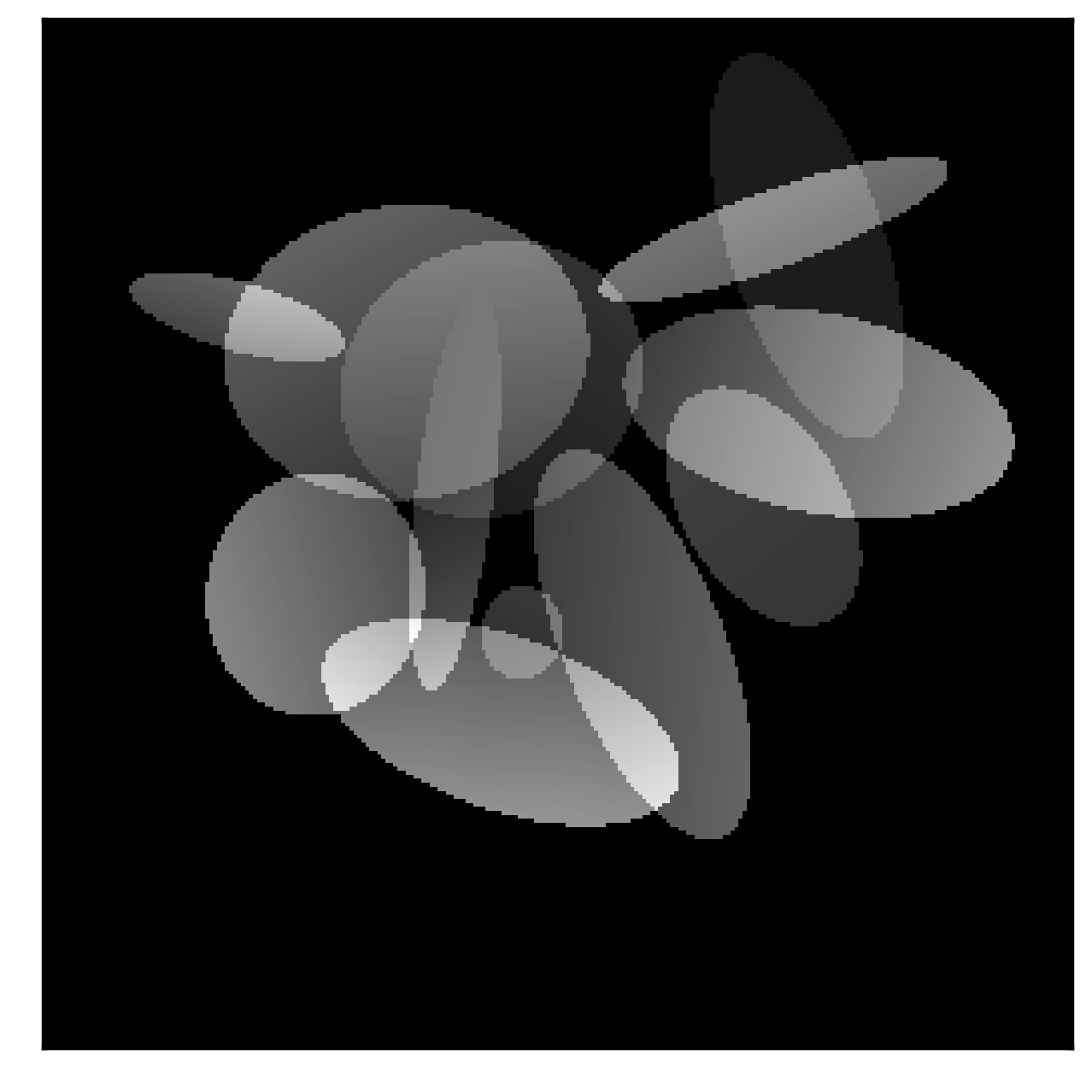}
\includegraphics[scale=0.65]{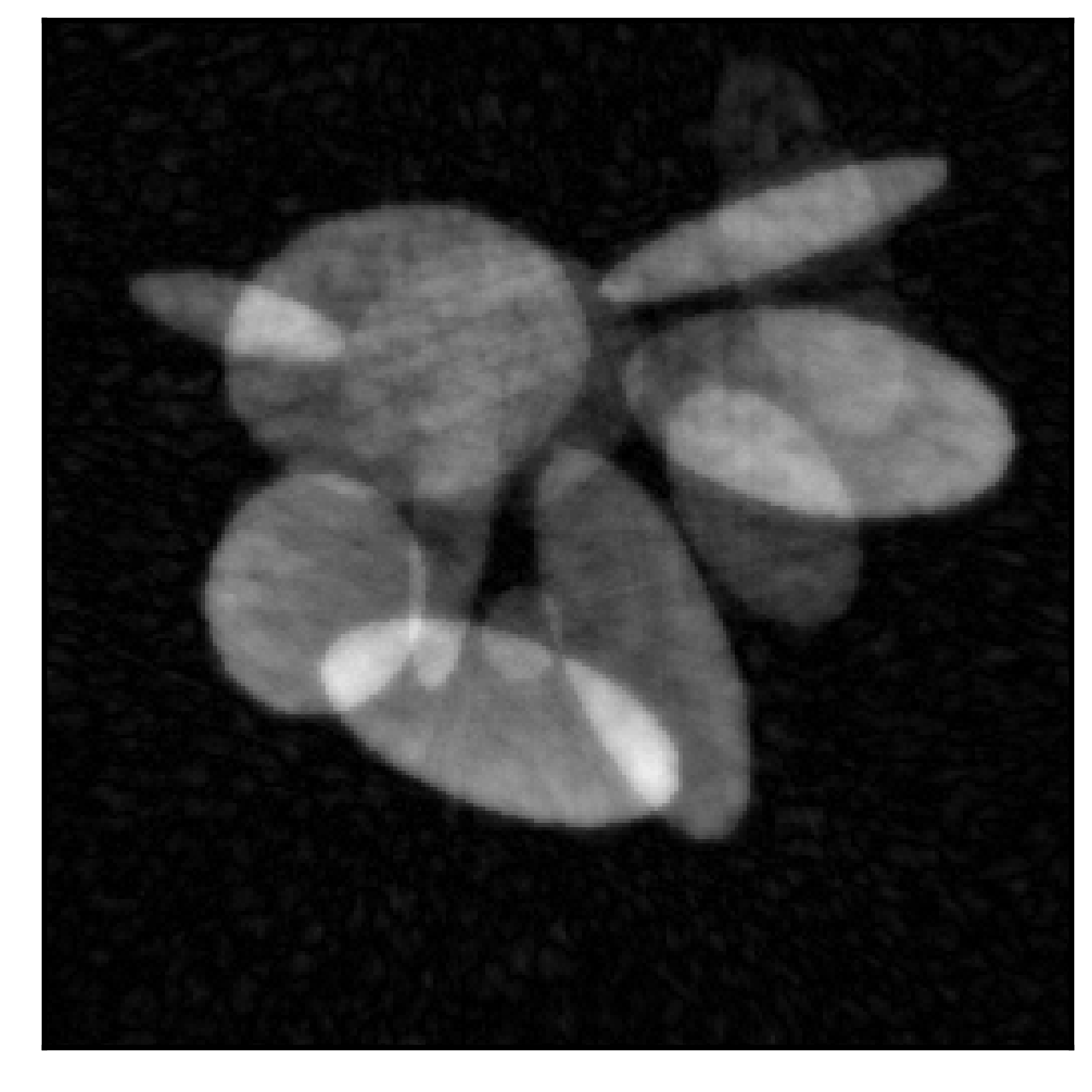}
\caption{Left: The sampling mask \(\mask\) used in the experiments, comprising 40 lines through the origin. The retained frequencies are shown in black. Middle: An example image \(x\). Right: A low quality linear reconstruction, using the pseudoinverse of the measurement matrix \(\mriopinv\mriop x\).}
\label{fig:mask_radial40}
\end{figure}

Compressed sensing (CS) refers to the approach of favoring solutions that are sparse under some given transform \(\Psi\).
Under certain conditions, the sparsest solution is also the one with the smallest \(\ell^1\)-norm \cite{foucart2013invitation}, and hence we are left with the following convex optimization problem:
\begin{equation}
	\label{eq:csconstrained}
	\min\limits_{z\in\spaceim}
	\norm{\Psi z}_1\ 
	\text{subject to}\ 
	\norm{\mriop z - y}_2 \leq \eta.
\end{equation}

Several choices of the transform \(\Psi\) can be found in the CS-MRI literature, including the identity, the wavelet transform, and the image gradient \cite{lu2008csmri}.
In this work, we focus on the last one of these choices.
More precisely, we choose \(\Psi=\nabla\), where \(\nabla\colon\spaceim\to\spaceim\times\spaceim\) is the two-dimensional finite difference operator on \(\spaceim\) with periodic boundary conditions.
The quantity \(\norm{\nabla z}_1\) is known as the \emph{total variation} (TV) of the image \(z\).
Minimization of the TV promotes
sparse gradients and hence 
piecewise constant solutions.
For computational efficiency, we solve the unconstrained formulation of (\ref{eq:csconstrained}),
\begin{equation}\label{eq:uncons}
	\reconeta{y}
	= \argmin\limits_{z\in\spaceim}\norm{\mriop z - y}_2^2
		+\lambda_\eta\norm{\nabla z}_1,
\end{equation}
where \(\lambda_\eta>0\) is a regularization parameter.
An appropriate choice of \(\lambda_\eta\) ensures that \(\reconeta{y}\) is a solution to (\ref{eq:csconstrained}) \cite{foucart2013invitation}.
However, since computing this exact value is infeasible, we choose \(\lambda_\eta\) based on a simple heuristic.

\subsection{DNNs as an alternative to CS}
Another approach to the MRI problem is to learn the inverse mapping \(\spacefr\to\spaceim\) from fully sampled data, i.e., to replace \(\recon\) by a neural network.
Several different strategies appear in the literature.
Fully learned networks learn the entire inversion without any knowledge of the forward model at all \cite{zhu2018image}, while others use the linear operator \(\mriopinv\colon\spacefr\to\spaceim\) as a first layer so that only the postprocessing step \(\spaceim\to\spaceim\) is learned \cite{wang2016accelerating}.
The forward model can also be incorporated at several stages in the network,
as in networks that are based on unrolled iterative optimization algorithms \cite{sun2016deep}.

For our experiments, we use the networks \(\tira\) and \(\tirafl\) from \cite{genzel2020solving},
which are based on the fully convolutional Tiramisu architecture \cite{jegou2017one}.
The difference between the two networks is that \(\tirafl\) is a fully learned network \(\spacefr\to\spaceim\), while \(\tira\) applies
the pseudoinverse \(\mriopinv\) to the measurements first.
Both networks are trained using the mean squared error as a loss function, and Gaussian noise is added to the input as a means of regularization.

\section{Adversarial perturbations}
\label{sec:adv}
The study of adversarial examples originates in image classification.
In that context, given an image and a classifier, an \emph{adversarial perturbation} is one that is imperceptible when added to the image but changes the output of the classifier.
The perturbed image is called an \emph{adversarial example}.
The imperceptibility of the perturbation is difficult to define, so this requirement is commonly replaced by a bound on the norm of the perturbation.
A popular choice is the \(\ell^\infty\)-norm  \cite{goodfellow2014explaining,madry2017towards},
since if each pixel in the image is changed only by a small value, then one can be sure that the semantic meaning of the image stays the same.
In contrast, the meaning of an image may be changed (say a handwritten 1 to a 7) by introducing a localized perturbation with a small \(\ell^2\)-norm.
It should be noted that there exist other adversarial image transformations that are not small in the \(\ell^\infty\)-norm, such as rotations, translations, or smooth deformations \cite{alaifari2019adef}.

Two difficulties arise when adapting the notion of adversarial examples to MRI.
Firstly, since measurements are in the frequency domain, the imperceptibility of perturbations is not very meaningful.
Secondly, since the output of the reconstruction method is a continuous variable (as opposed to discrete labels in classification),
a notion of severity is needed to quantify the effect of an adversarial perturbation.

It is natural to tackle the first problem by referring to the noise model; i.e., a perturbation is ``imperceptible'' if it is small in \(\ell^2\)-norm, as in both \cite{antun2020instabilities} and \cite{genzel2020solving}.
For the second problem, \cite{antun2020instabilities} inspects the visual quality of the reconstructed images, while \cite{genzel2020solving} uses the \(\ell^2\)-error for quantitative analysis of stability to perturbations.
Thus, for given measurements \(y\) and noise level \(\eta\), \cite{genzel2020solving} defines an adversarial perturbation \(e\) by
\begin{equation}
\label{eq:globalpert}
	e = \argmax\limits_{e\in\spacefr} \norm{\reconeta{y + e} - \reconeta{y}}_2
	\quad\text{such that}\quad
	\norm{e}_2 \leq \eta
\end{equation}
(and in fact, \cite{antun2020instabilities} solves  a similar unconstrained optimization problem).
We refer to
\begin{equation}
    \label{eq:artifact}
	\rho = \reconeta{y + e} - \reconeta{y}
\end{equation}
as the \emph{reconstruction artifact} induced by the perturbation \(e\).

In this work, we argue that the reconstruction \(\ell^2\)-error does not sufficiently capture the most harmful reconstruction artifacts.
In the medical setting, a misdiagnosis might be based on a localized anomaly in the image rather than an overall poor quality of reconstruction
(see Figure \ref{fig:mask_structpert}).
Thus we aim to create perturbations that cause localized reconstruction artifacts.
We replace the Euclidean norm in the objective with a weighted seminorm:
\begin{equation}
\label{eq:localpert}
	e = \argmax\limits_{e\in\spacefr} \norm{\phi\odot\left(\reconeta{y + e} - \reconeta{y}\right)}_2
	\quad\text{such that}\quad
	\norm{e}_2 \leq \eta
	,
\end{equation}
where \(\odot\) denotes entrywise product and \(\phi\in\spaceim\) is a weight vector.
To promote localized artifacts, we select a weight vector that is the (discrete) indicator function of a disk of radius \(\sigma>0\), centered at \(\mu=(\mu_1,\mu_2)\in[1,n]^2\).
In other words, we let \(\phi=\phi^{\mu,\sigma}\in\spaceim\) with
\begin{equation*}
\label{eq:diskweight}
\phi^{\mu,\sigma}_{ij}
=
\begin{cases}
1 \quad \text{if}\ (i-\mu_1)^2 + (j-\mu_2)^2\leq \sigma^2,\\
0 \quad \text{otherwise,}
\end{cases}
\end{equation*}
for \(i,j=1,\ldots,n\).

Although we target a specific location \(\mu\) in the image, we are interested in any large artifacts that may appear in reconstruction.
Therefore, we judge the severity of the reconstruction artifact \(\rho\) by its \(\ell^\infty\)-norm.
We solve (\ref{eq:localpert}) for all locations \(\mu\) on a regular grid in \([1,n]^2\) and select the perturbation \(e_\mu\) that maximizes
\begin{equation*}
	\norm{\reconeta{y + e_\mu} - \reconeta{y}}_\infty
\end{equation*}
as our adversarial perturbation.
The radius, \(\sigma\), is fixed at a small value relative to \(n\).
Note that for a large enough radius, (\ref{eq:localpert}) is equivalent to (\ref{eq:globalpert}).

\subsection{Visualizing perturbations}
While we search for reconstruction artifacts by perturbing the measurement vector \(y=\mriop x\), it is important to consider whether the artifact is in some sense already encoded in the perturbed measurements \(y+e\), rather than ``invented'' by the reconstruction method.
Indeed, in some cases shown in \cite{genzel2020solving} (see, for example, Fig.~7 therein) the same perturbation induces similar artifacts for both TV-regularized reconstruction and DNN-based reconstruction, indicating that these artifacts are present in the perturbation itself and are not created in the process of reconstruction.

To see if the perturbation \(e\in\spacefr\) encodes an artifact in this way, we visualize it by an image perturbation \(r\in\spaceim\) such that \(\mriop r = e\).
We select
\(
    r=\mriopinv e,
\)
as this perturbation is orthogonal to \(\ker\mriop\) 
and therefore does not contain any component that would be lost in the measurement process.
Moreover, since the Fourier transform is an isometry, we have \(\norm{r}_2=\norm{e}_2\).
Then we can compare the perturbed image \(x+r\) with the reconstruction
\(\reconeta{y+e}=\reconeta{\mriop(x+r)}\), both visually and in terms of the \(\ell^\infty\)-norm.

We note that by making use of \(\ker\mriop\), one can potentially select a different image perturbation \(r\), with a lower \(\ell^\infty\)-norm, that maps to the same perturbation \(e=\mriop r\).
Allowing such arbitrary image perturbations may change the comparison presented below, between TV-regularized reconstruction and DNN-based reconstruction.

\section{Results}
\label{sec:results}
We demonstrate the strategy described above by creating adversarial perturbations for synthetically generated \(256\times 256\) images of phantom ellipses (as in \cite{genzel2020solving}) with pixel values ranging from 0 to 1.
To generate the measurement vectors, we apply the MRI forward operator \(\mriop=\maskop\circ\fourier\) with a sampling mask \(\mask\) defined by 25, 40, or 80 lines through the origin (see Figure \ref{fig:mask_radial40}), which corresponds to \(11\%\), \(17\%\), or \(32\%\) of the coefficients.
We train the neural networks, \(\tira\) and \(\tirafl\), only on the data sampled on 40 lines.
 Moreover, all the figures in this paper are based on measurements using that same mask.

The reconstruction map in \eqref{eq:uncons}, \(\recon\), is realized by the alternating direction method of multipliers (ADMM) \cite{boyd2011distributed}, and the implementation is taken from \cite{genzel2020solving}.
For each noise level \(\eta\), the parameter \(\lambda_\eta\) is selected by grid search (along with the one free parameter of ADMM) to minimize the relative \(\ell^2\)-error of the reconstructed images compared with the original ellipse images, averaged over 50 samples.

We create adversarial perturbations for 50 ellipse images according to (\ref{eq:localpert}) at relative noise levels (\(\eta/\norm{y}_2\)) ranging from \(0.5\%\) to \(10\%\).  For each noise level and each image, we search for the best position of localization weight vector, \(\mu\), on an \(8\times 8\) grid.
The radius of the localization disk is fixed at \(\sigma=5\).
Experimentation showed that similar reconstruction artifacts appear for other values of \(\sigma\), as long as it is small.
For larger \(\sigma\), the artifacts are not well localized and resemble those of \cite{genzel2020solving}.

\begin{figure}
\centering
\includegraphics[scale=0.76]{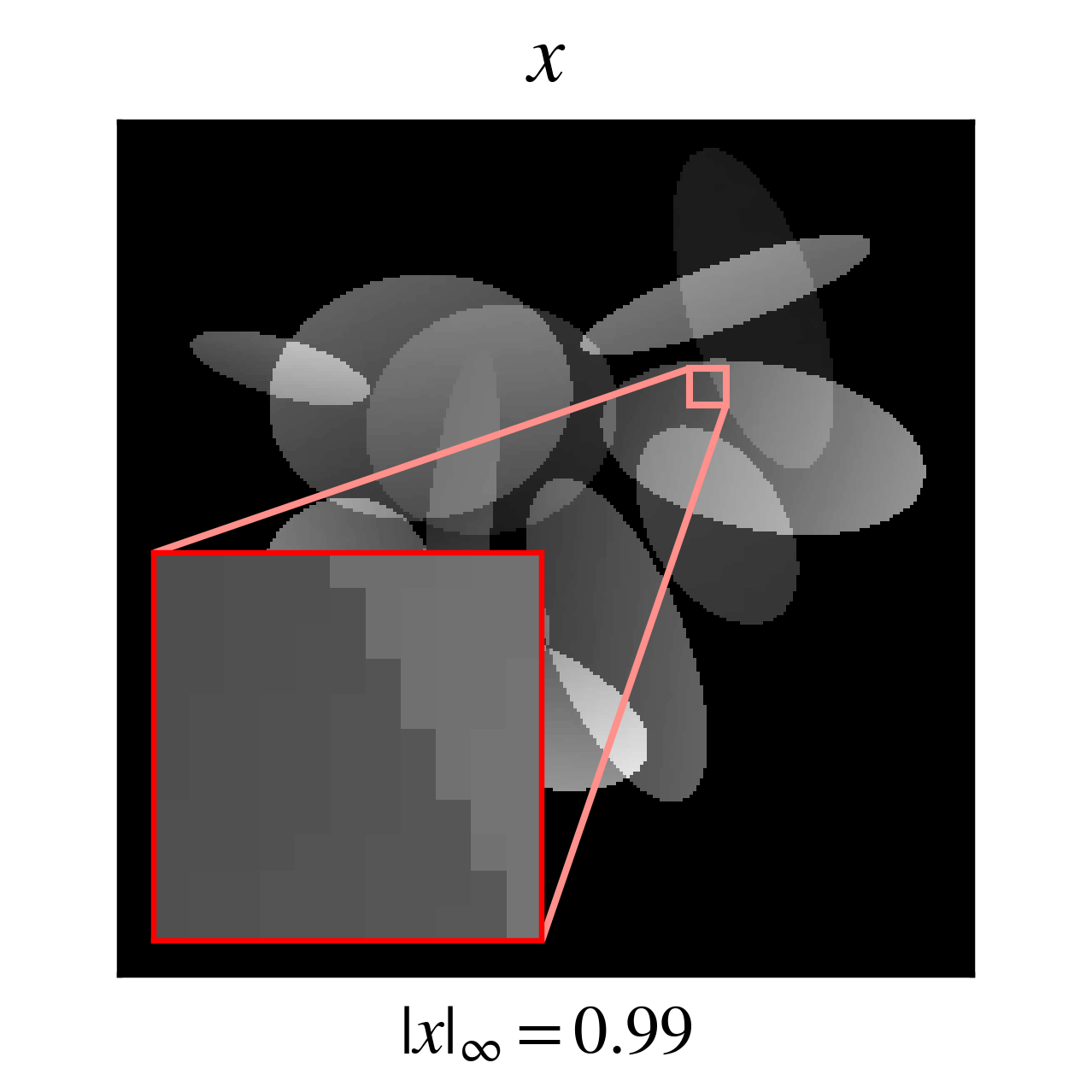}
\includegraphics[scale=0.76]{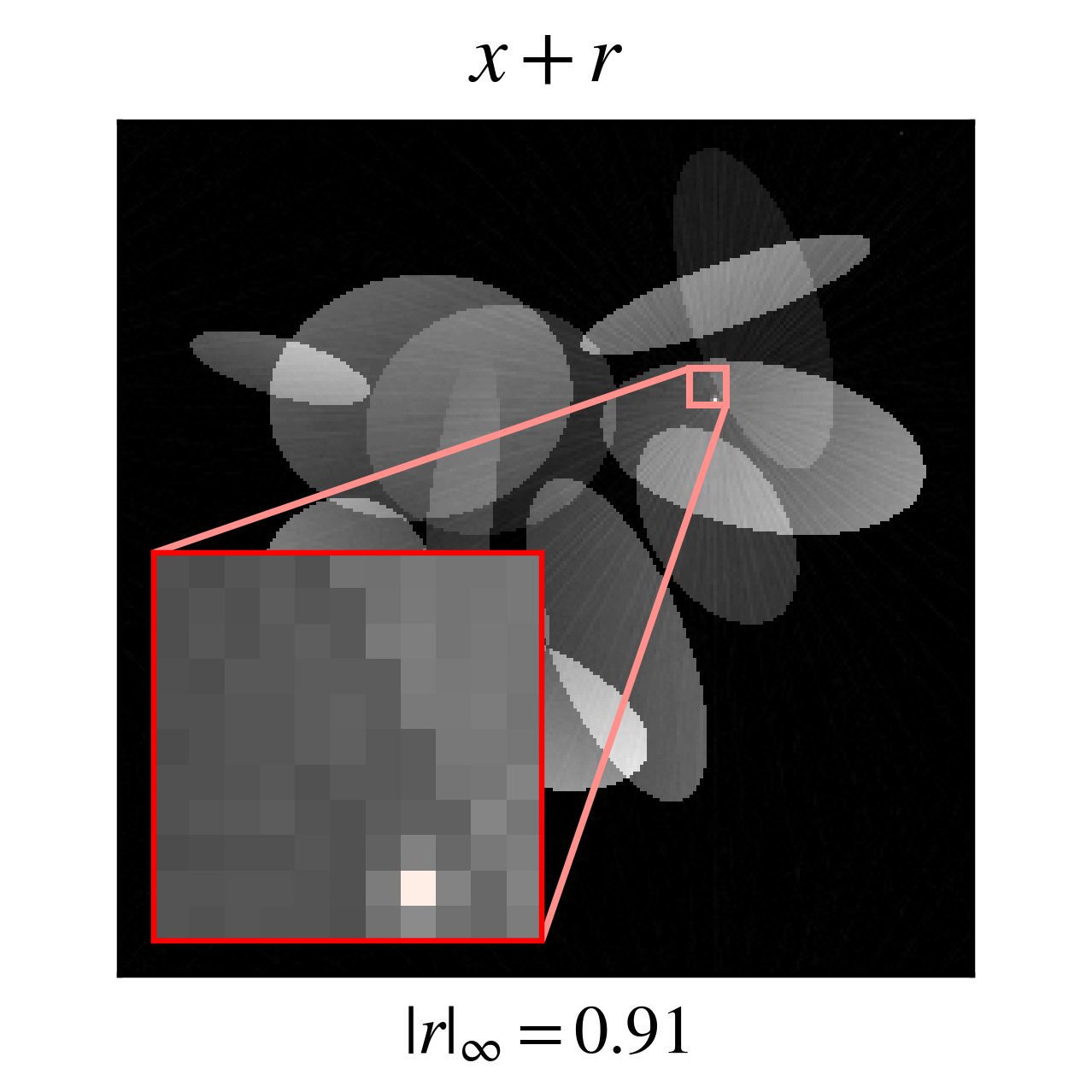}
\includegraphics[scale=0.76]{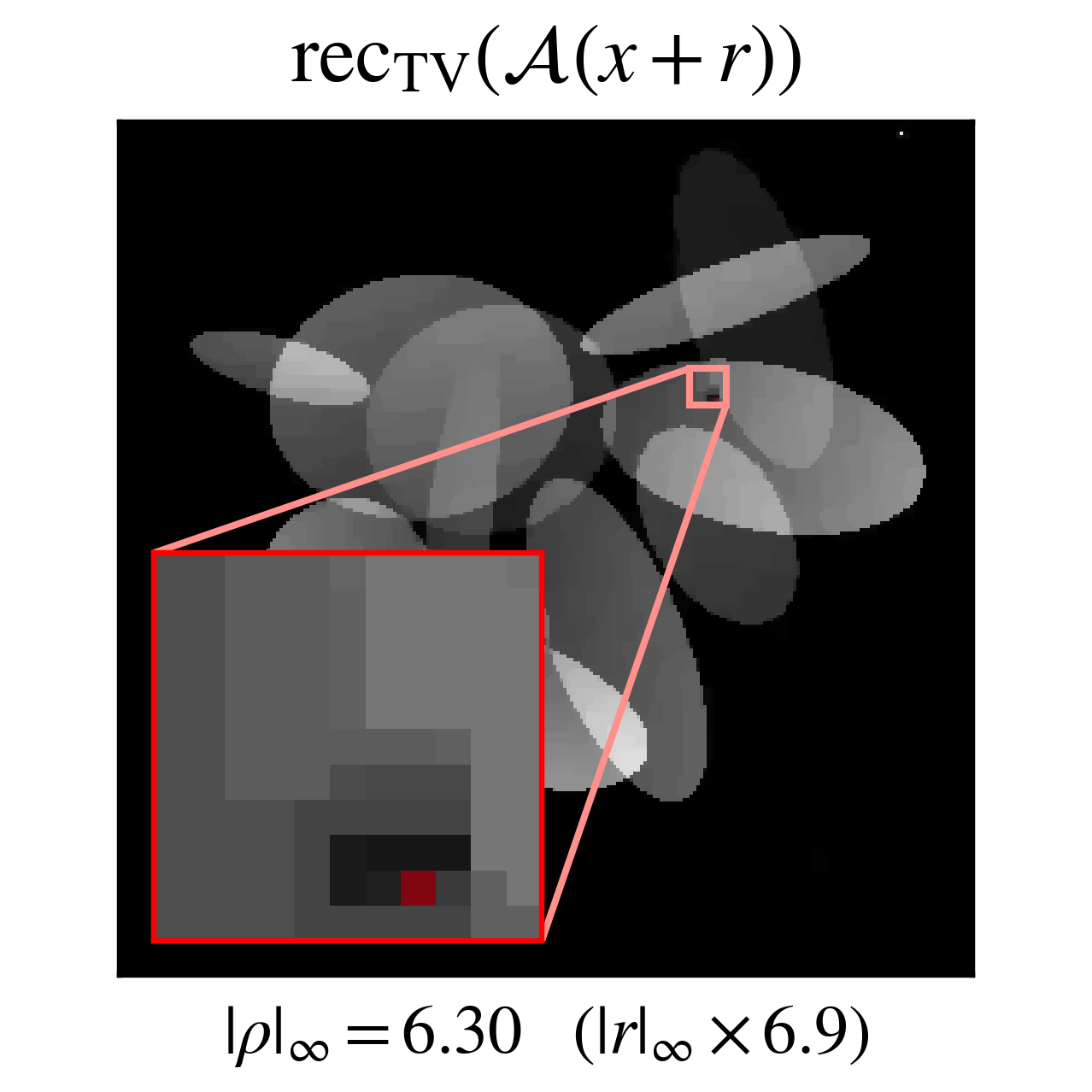}
\includegraphics[scale=0.76]{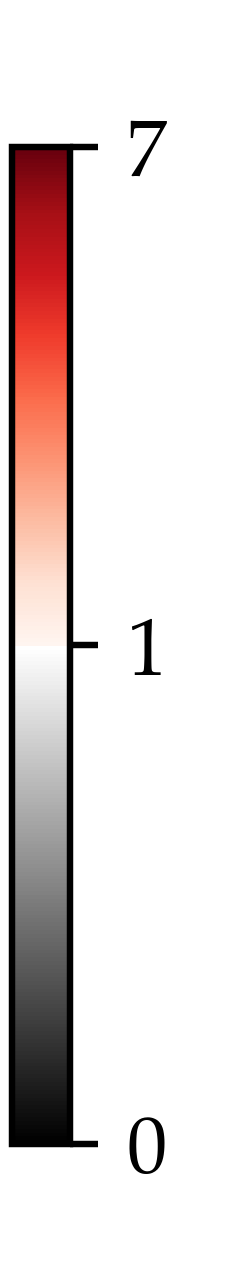}
\caption{
Left: The original image.
Middle: A perturbed image at 4\% relative noise. A close-up of the selected location, \(\mu\), shows that the image perturbation, \(r=\mriopinv e\), introduces a spike.
Right: The TV-reconstruction of the perturbed measurements. The spike is greatly amplified. Note that pixel values from 0 to 1 are shown in grayscale, while values beyond 1 are shown in red (the colorbar on the right is not linear).
}
\label{fig:simple_signal_nr4p}
\end{figure}

Figure \ref{fig:simple_signal_nr4p}
shows an example image \(x\), a perturbed image \(x+r\) (where \(r=\mriopinv e\) is a visual representation of the perturbation), and the reconstruction from perturbed measurements \(\reconeta{y+e}\).
The output of the reconstruction method is a vector in \(\spaceimdim{256}\), which we visualize by taking the entrywise modulus.
While the original ellipse images have pixel values between 0 and 1,
the reconstructed images have artifacts that reach far beyond that range.
Instead of transforming the reconstructed images to fit into \([0,1]\),
either by scaling by the reciprocal of the largest pixel value
or by replacing the values exceeding 1 with 1,
we display values between 0 and 1 in grayscale and values beyond 1 in shades of red.
This way,
details can be clearly seen and compared between images,
while the severity of tall artifacts can be assessed as well.

We now list some observations based on this quantitative experiment and visual inspection of examples.

\textbf{TV-regularized reconstruction amplifies artifacts already present in the perturbation.}
We observe that the adversarial perturbations indeed induce localized reconstruction artifacts in the form of a spike.
A spike is also noticeably present in the image perturbation, $r$, itself.
The image perturbation is therefore \emph{not} imperceptible in the \(\ell^\infty\)-sense, except at very low noise levels.
However, in \(\ell^\infty\)-norm, the reconstruction artifact \(\rho\) from (\ref{eq:artifact})
is much larger than the perturbation.
We consider the \emph{amplification factor}
\begin{equation*}
\alpha = \norm{\rho}_\infty/\norm{r}_\infty
\end{equation*}
to quantify this effect.
In Figure \ref{fig:simple_noise_lvls},
we observe a large \(\alpha\) at different noise levels, but visually, this phenomenon is especially conspicuous at higher noise levels.

\begin{figure}
\centering
\includegraphics[scale=0.58]{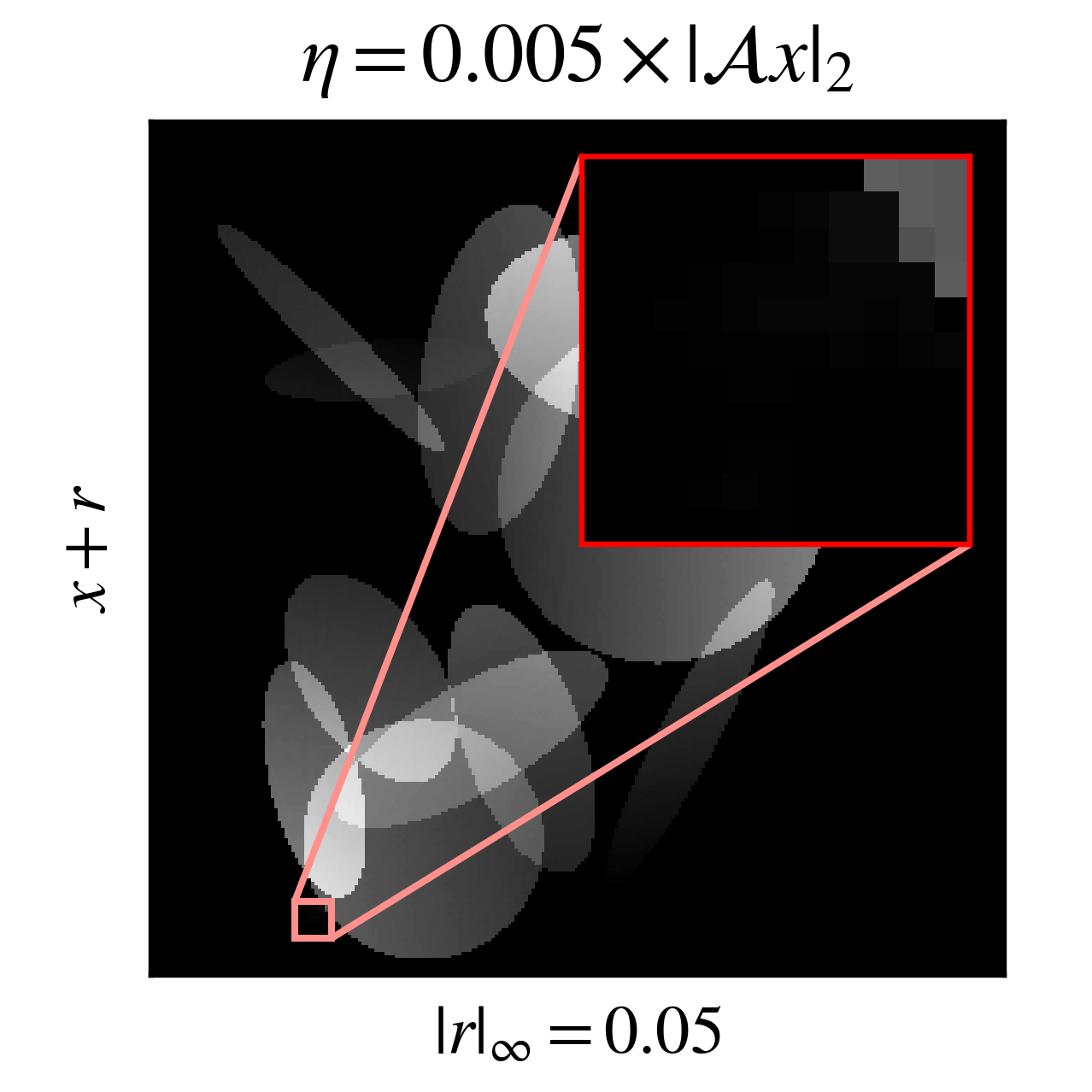}
\includegraphics[scale=0.58]{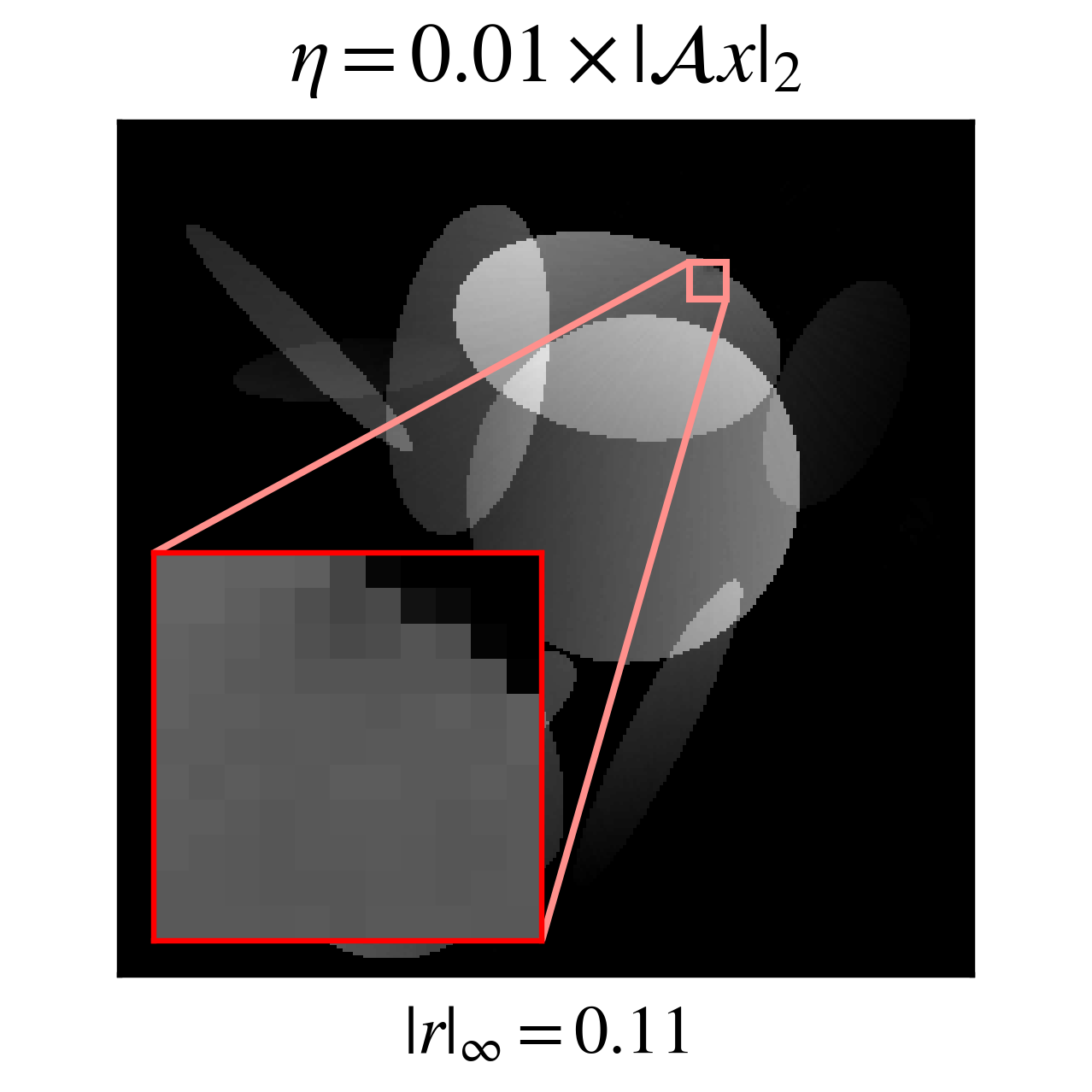}
\includegraphics[scale=0.58]{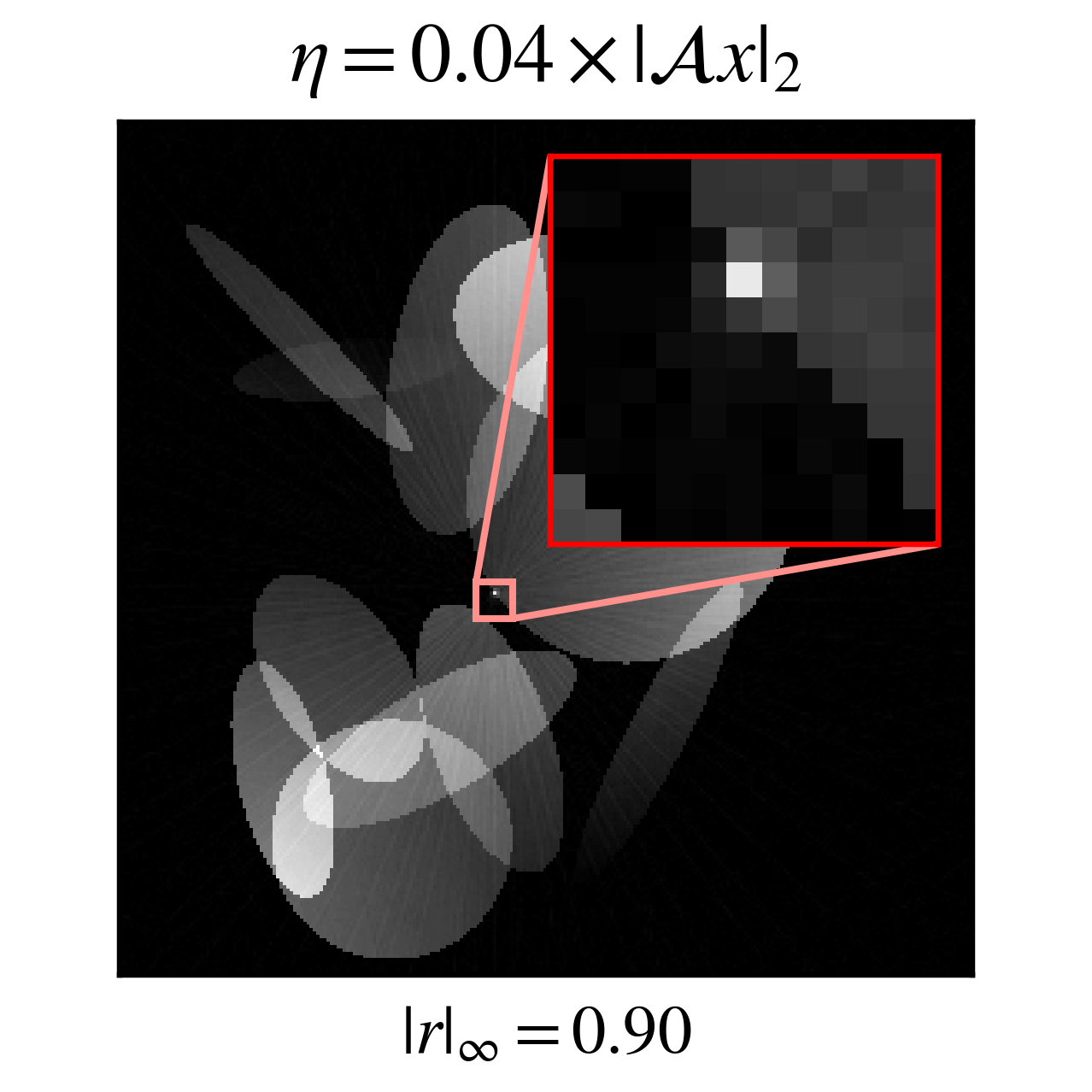}
\includegraphics[scale=0.58]{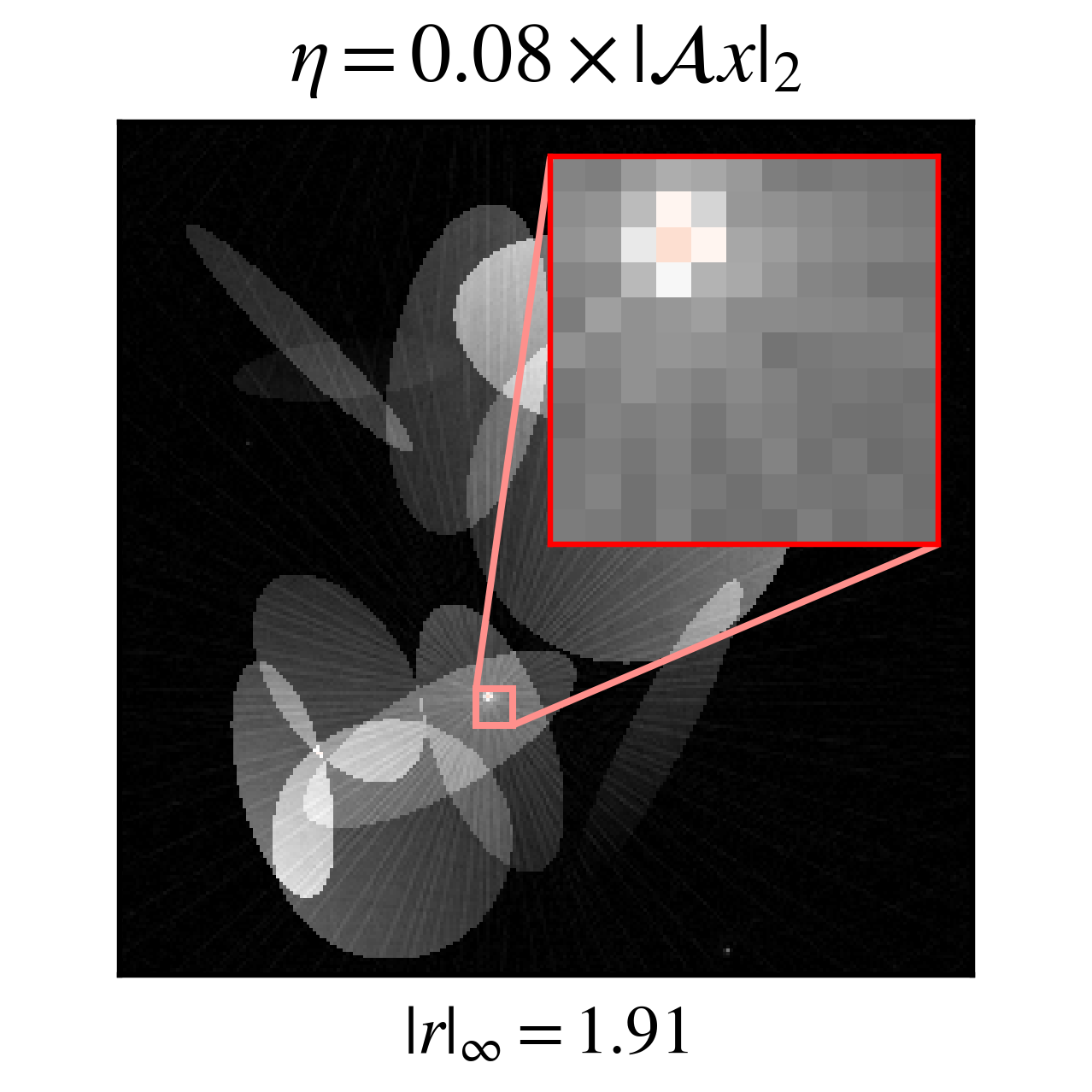}
\includegraphics[scale=0.58]{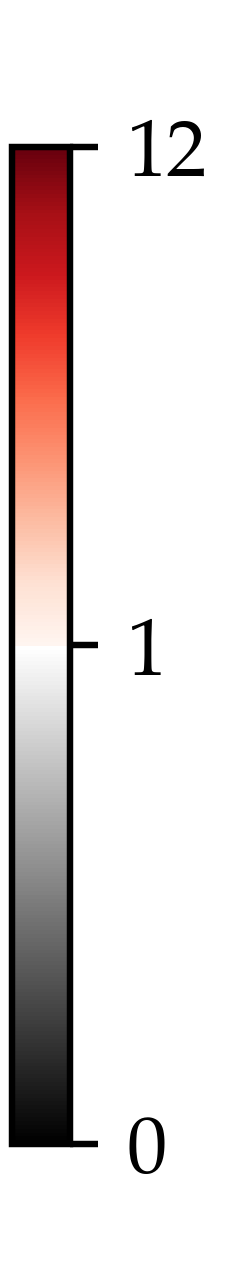}\\
\includegraphics[scale=0.58]{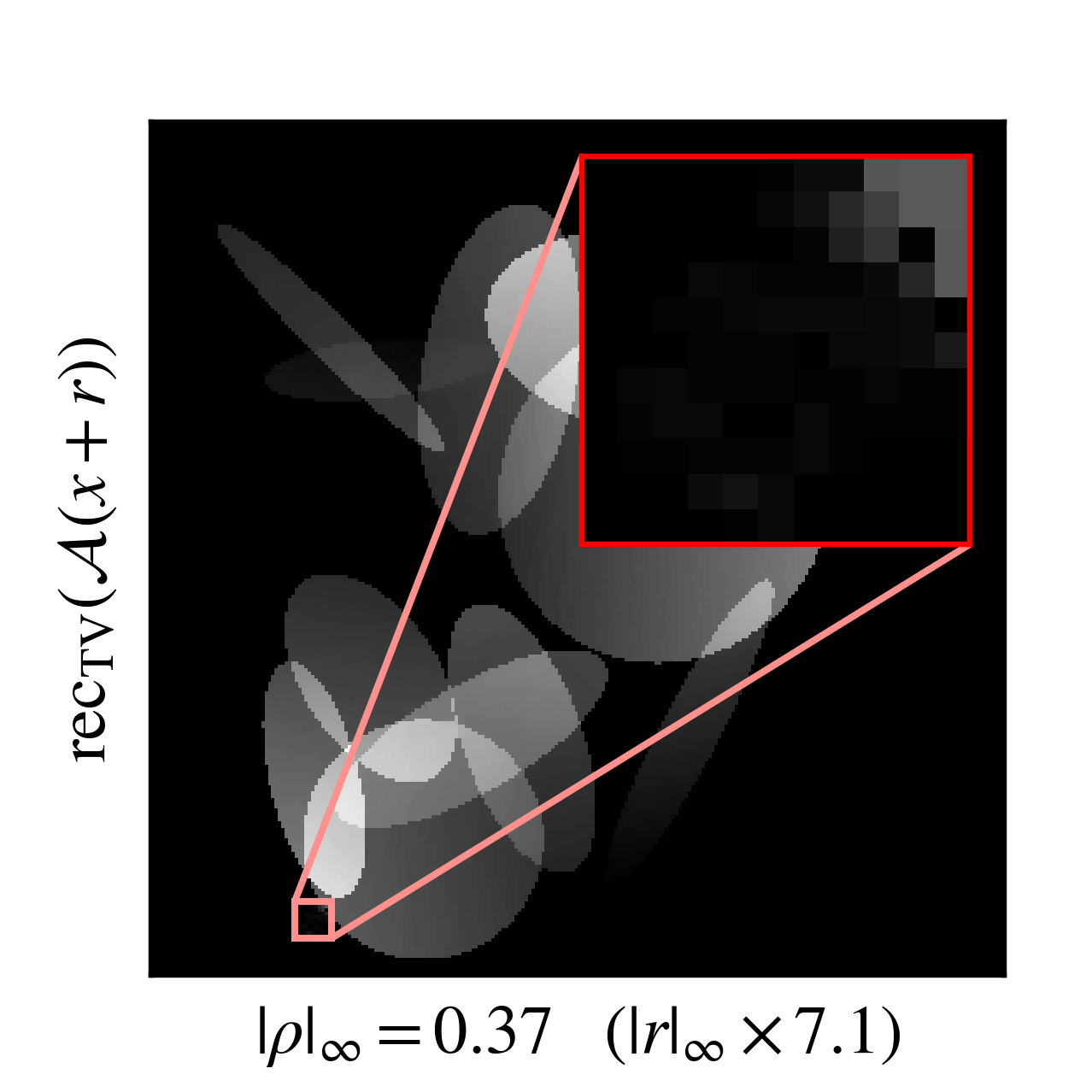}
\includegraphics[scale=0.58]{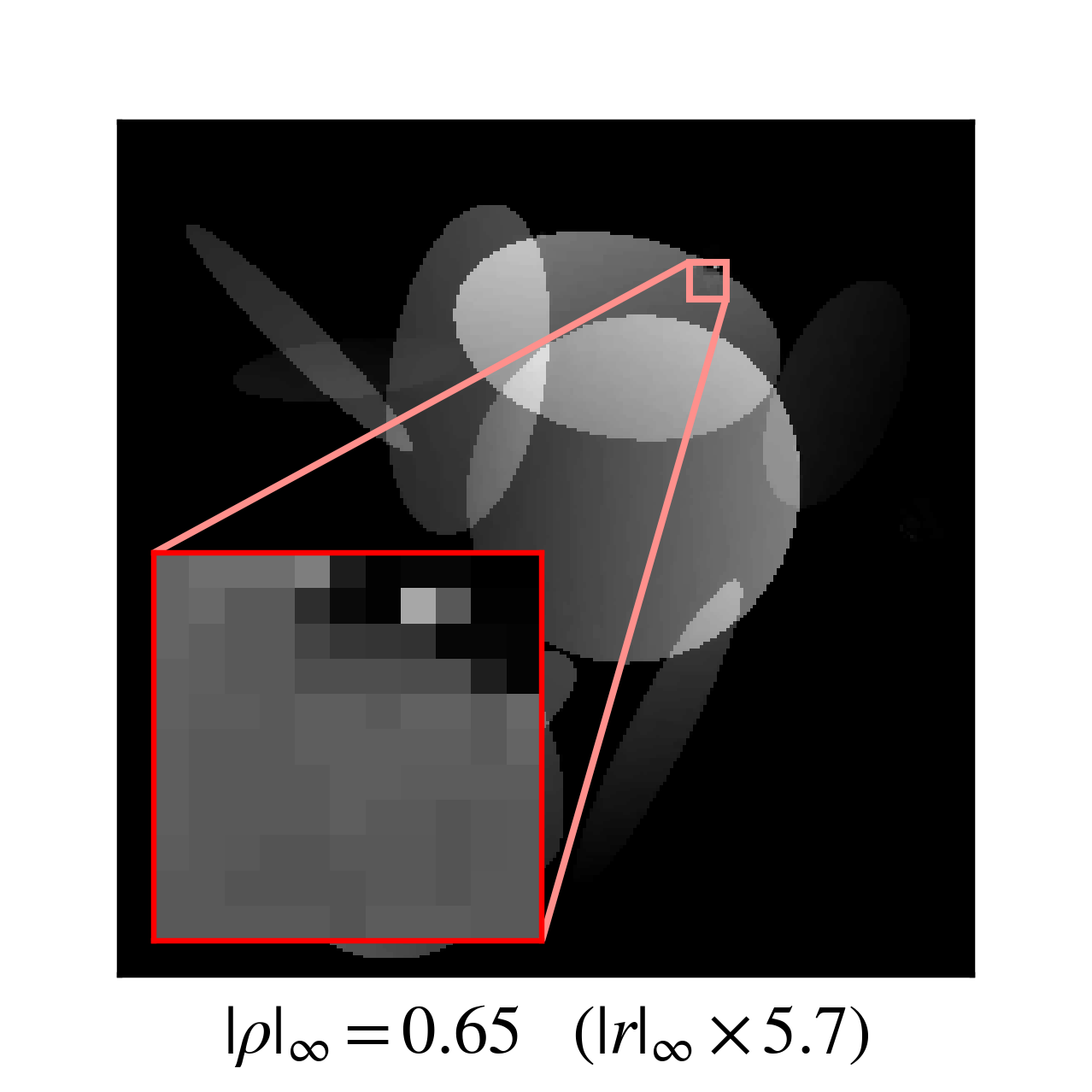}
\includegraphics[scale=0.58]{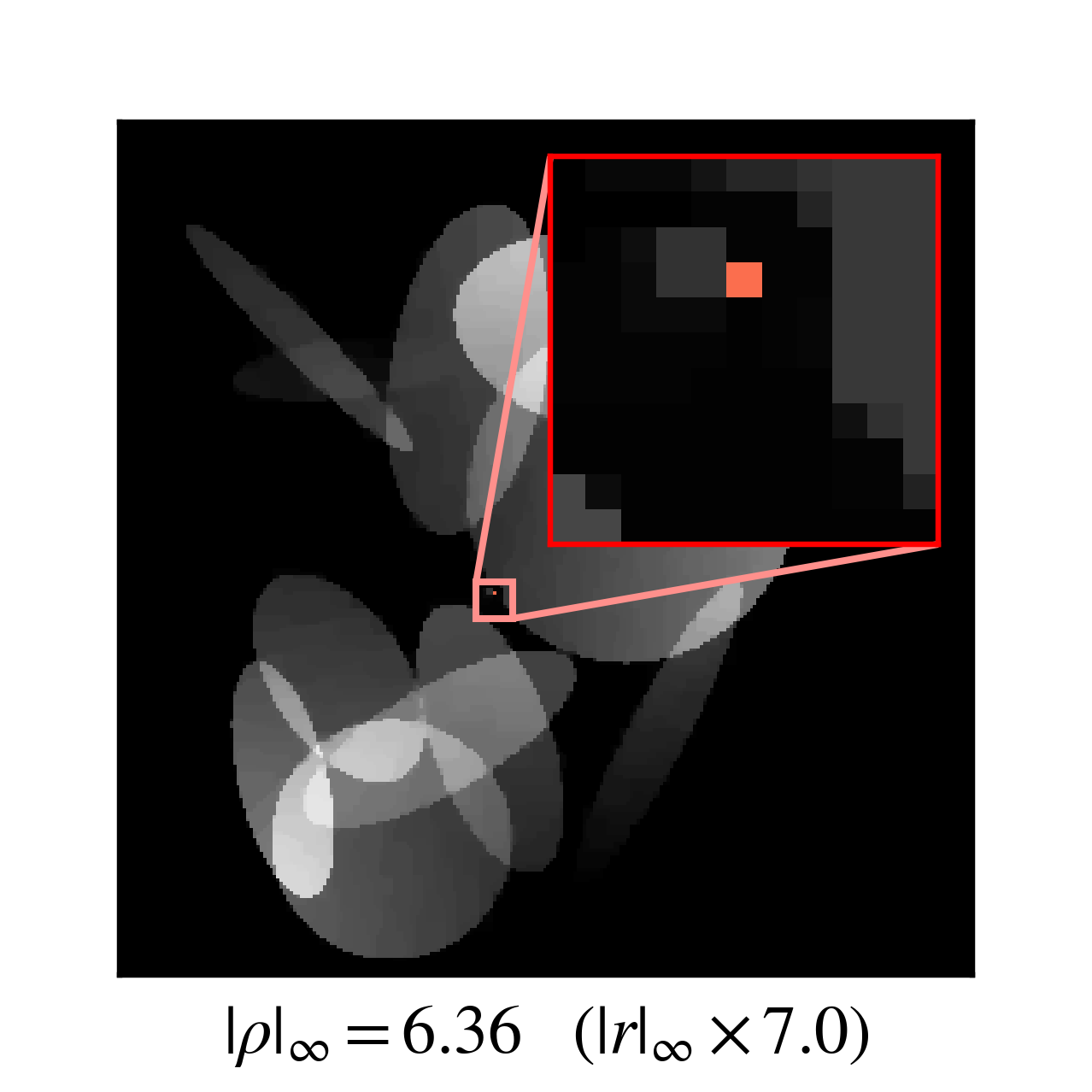}
\includegraphics[scale=0.58]{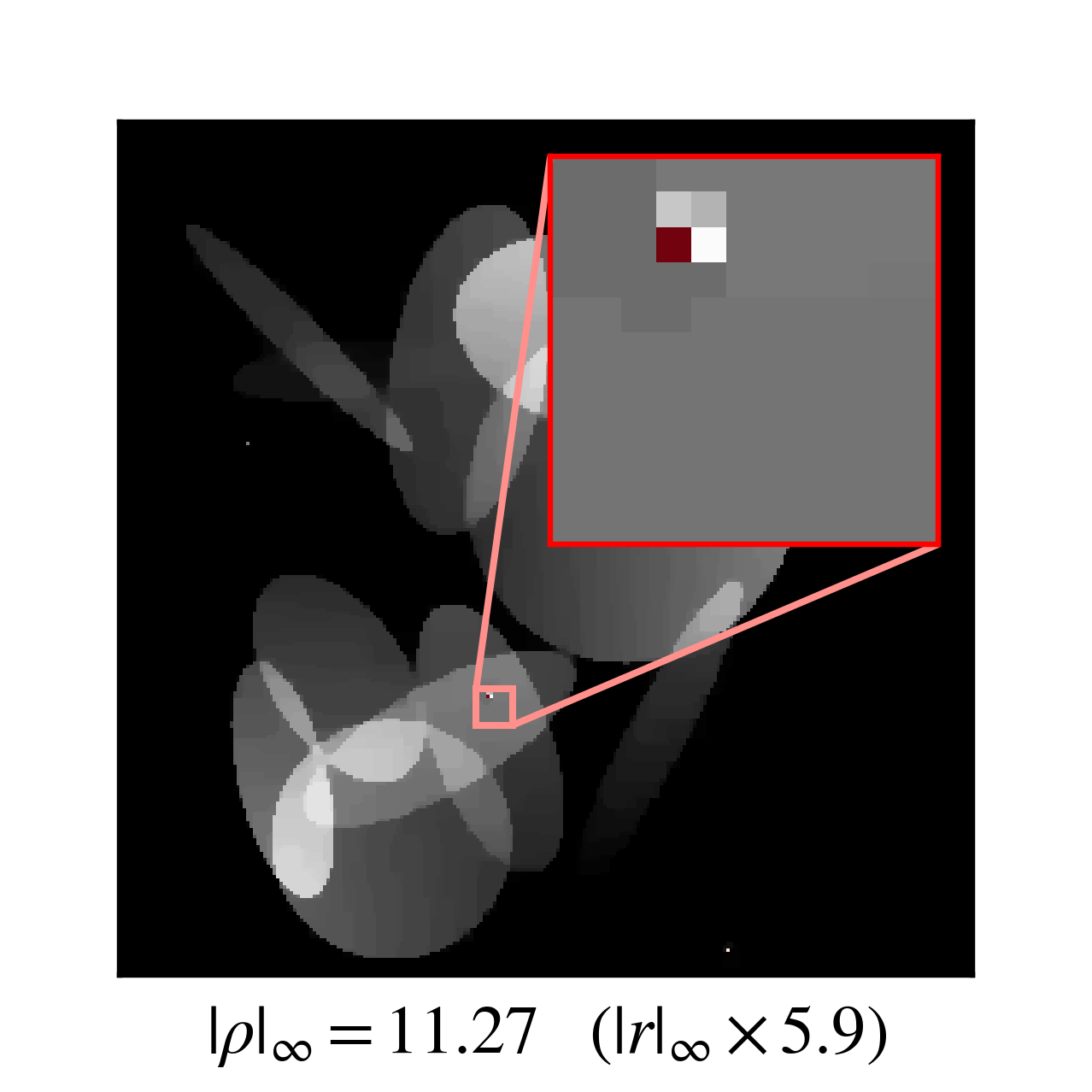}
\includegraphics[scale=0.58]{\FigPath exp_grid_attack_quant_radial40_width10_noise_lvls_cb }

\caption{Adversarial attacks at 0.5\%, 1\%, 4\%, and 8\% relative noise.
Upper row: the original image with the image perturbation added.
Lower row: TV-regularized reconstruction of the perturbed measurements. 
The reconstruction artifact \(\rho\) is consistently much larger than the perturbation \(r\) in the \(\ell^\infty\)-norm.
}
\label{fig:simple_noise_lvls}
\end{figure}

\textbf{The amplification factor is roughly equal to the subsampling factor.}
Table \ref{tab:quant} shows the average amplification factor of the 50 ellipse samples for each noise level.
Although the amplification factor varies more at low noise levels, for TV-regularization its average does not seem to depend on the noise level.
There is, however, a clear dependence on the number of coefficients in the measurements.
In fact, for reasons that will become clear in the next section, the average amplification factor is approximately the subsampling factor \(n^2/m\),
which for the 25-, 40-, and 80-line sampling masks is approximately \(9.4\), \(6.0\), and \(3.1\).

\begin{table}
\centering
\caption{
The amplification factor \(\alpha=\norm{\rho}_\infty/\norm{r}_\infty\) across different relative noise levels, averaged over 50 samples from the ellipse data set.
For \(\recon\), we reconstruct from measurements based on 25, 40, and 80 sampling lines in the frequency domain with the subsampling factor \(n^2/m\approx 9.4\), \(6.0\), and \(3.1\), respectively.
The DNNs are tested on the 40-line sampling mask only.
}
\vspace{0.5em}
\label{tab:quant}

\begin{tabular}{c | c c c | c c}
	\(\eta/\norm{y}_2\) &\(\recon\) (25 lines)&\(\recon\) (40 lines) &\(\recon\) (80 lines) &\(\tira\)& \(\tirafl\)\\
\hline
0.5\%	& \(8.12	\pm 2.83\) & \(6.23	\pm 2.16\) & \(3.15	\pm 0.15\) & \(7.29	\pm 5.00\) &\(7.36	\pm 3.83\)\\
1.0\%	& \(9.05	\pm 2.56\) & \(5.64	\pm 2.24\) & \(3.55	\pm 0.26\) & \(5.91	\pm 4.76\) &\(5.67	\pm 4.40\)\\
2.0\%	& \(9.05	\pm 2.40\) & \(5.88	\pm 0.40\) & \(3.22	\pm 0.19\) & \(3.20	\pm 1.53\) &\(3.59	\pm 2.61\)\\
4.0\%	& \(9.06	\pm 0.43\) & \(6.88	\pm 0.56\) & \(3.19	\pm 0.10\) & \(2.38	\pm 0.81\) &\(2.34	\pm 0.69\)\\
8.0\%	& \(9.75	\pm 0.52\) & \(5.91	\pm 0.28\) & \(3.07	\pm 0.07\) & \(1.89	\pm 0.44\) &\(1.96	\pm 0.58\)\\    
10.0\%	& \(8.80	\pm 0.71\) & \(5.65	\pm 0.42\) & \(2.99	\pm 0.11\) & \(1.80	\pm 0.48\) &\(1.77	\pm 0.60\)\\
\end{tabular}
\end{table}

\textbf{The amplification factor is smaller for DNNs.}
Applying the attack strategy \eqref{eq:localpert} to the neural networks \(\tira\) and \(\tirafl\) does produce reconstruction artifacts.
However, at \(>1\%\) noise, the attack is far less effective than for \(\recon\).
Figure \ref{fig:simple_signal_dnn_nr4p} shows a typical artifact created by \(\tira\), which has a much lower \(\ell^\infty\)-norm than those created by \(\recon\).
It should be noted that these DNNs have been trained with jittering and that this influences  their ability to recover detail.
Therefore a different amplification factor should be expected for networks trained at a lower noise level.

\begin{figure}
\centering
\includegraphics[scale=0.76]{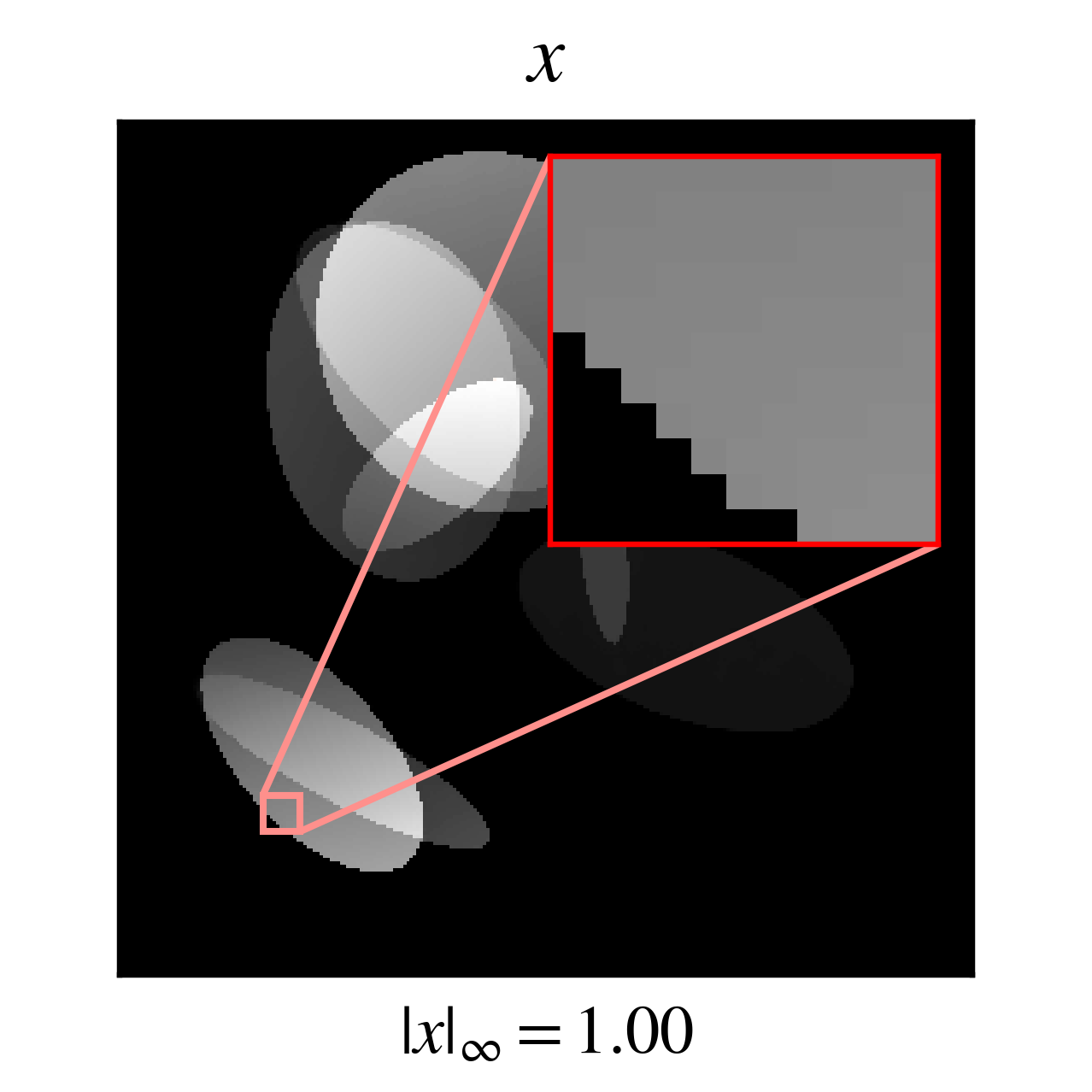}
\includegraphics[scale=0.76]{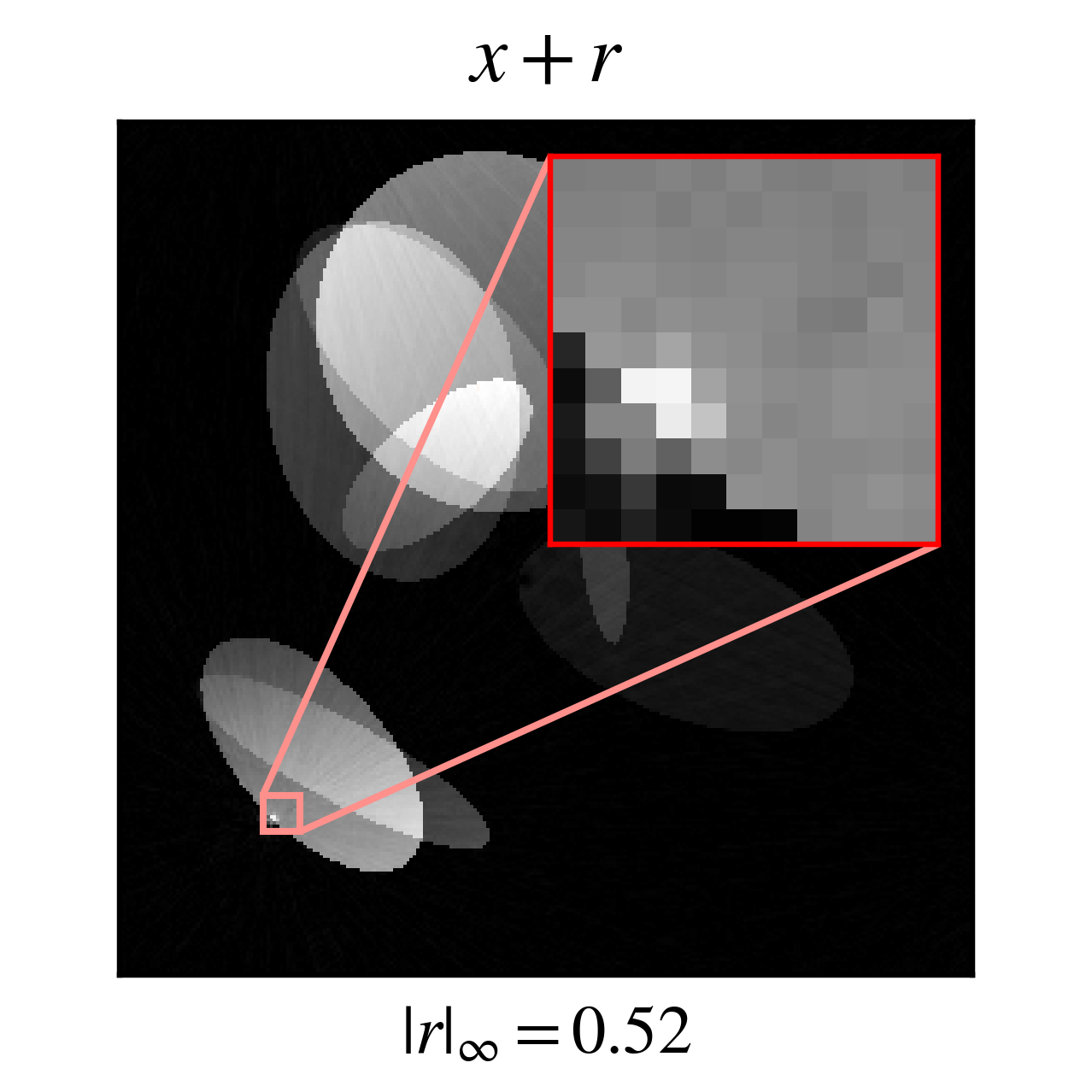}
\includegraphics[scale=0.76]{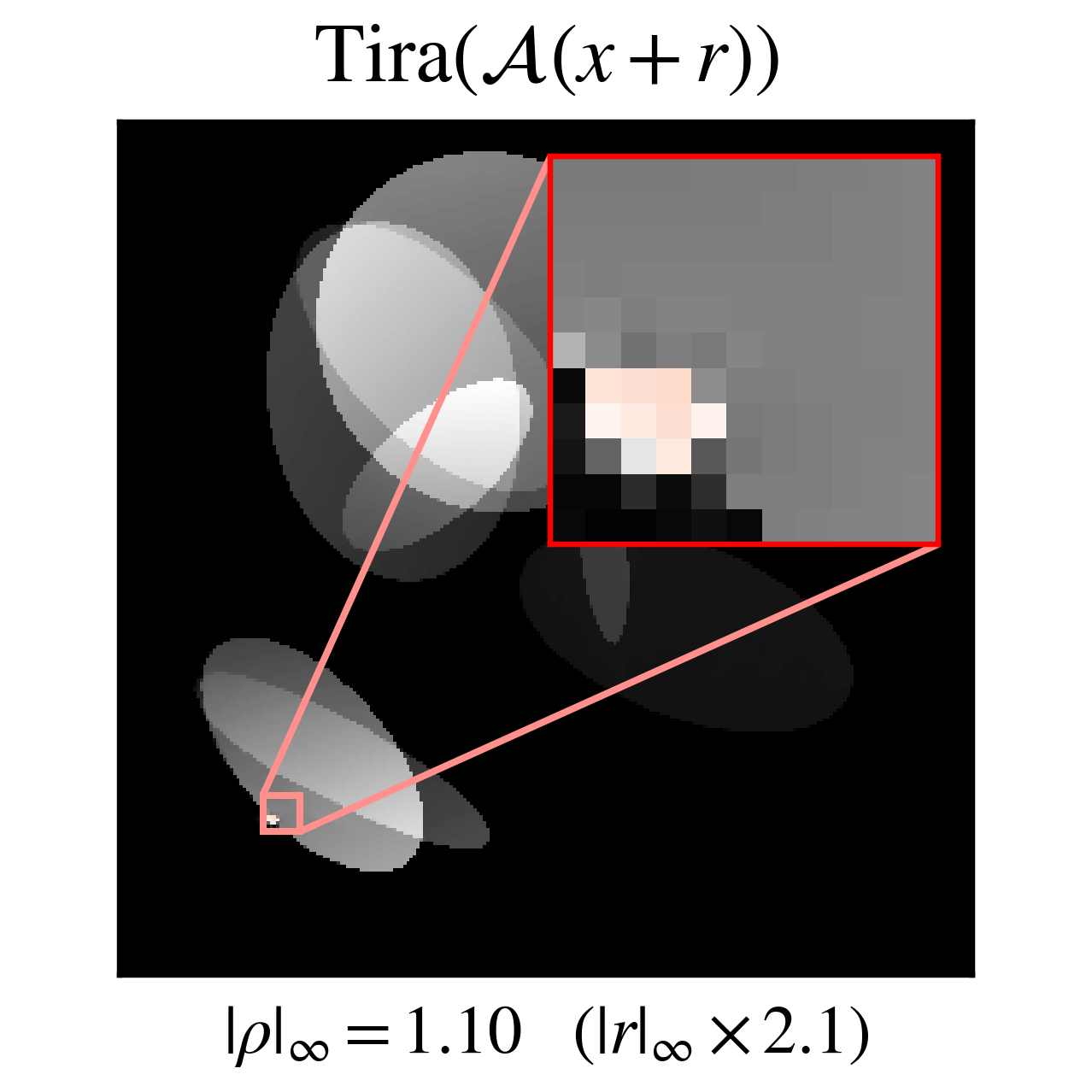}
\includegraphics[scale=0.76]{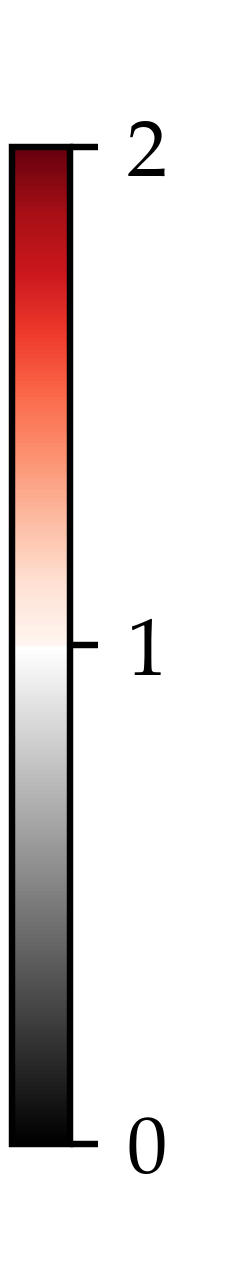}
\caption{
Analogous to Figure~\ref{fig:simple_signal_nr4p}, with the reconstruction method \(\recon\) replaced by the DNN \(\tira\).
Here, the amplification factor is \(\alpha = 2.13\).
}
\label{fig:simple_signal_dnn_nr4p}
\end{figure}

\textbf{Perturbations made for DNN transfer to \(\recon\) but not vice versa.}
In Figure~\ref{fig:simple_transfer_tira_nr4p}, we see that a perturbation created for a DNN can transfer to all three reconstruction methods, with \(\recon\) showing the most severe artifact.
On the other hand, using the DNNs to reconstruct from \(y+e\), where \(e\) is a perturbation created for TV-regularized reconstruction, produces images of good quality and no visible artifacts (Figure~\ref{fig:simple_transfer_tv_nr4p}).
In fact, the peak of the image perturbation \(r\) is dampened by the DNN reconstruction.

\begin{figure}
\centering
\includegraphics[scale=0.58]{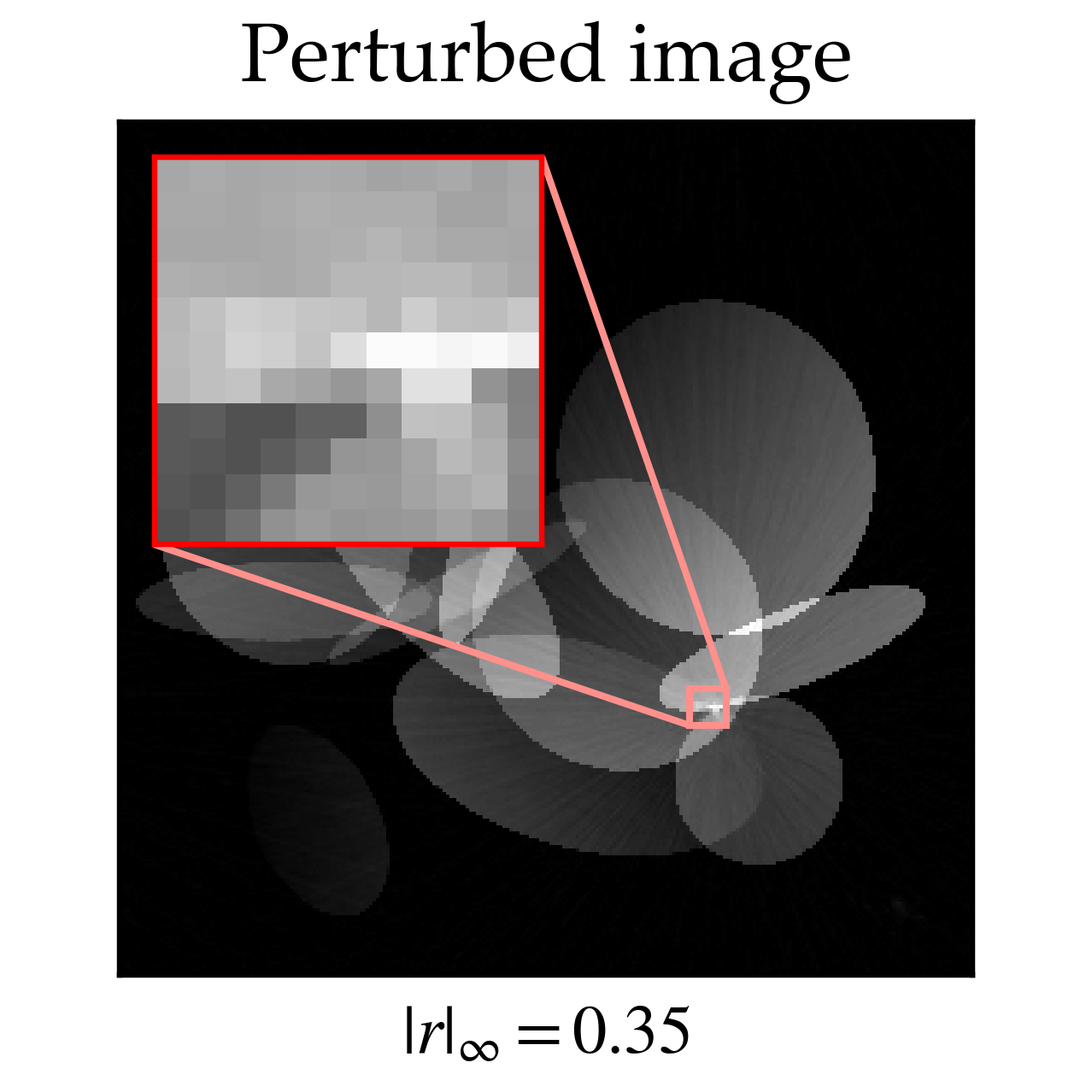}
\includegraphics[scale=0.58]{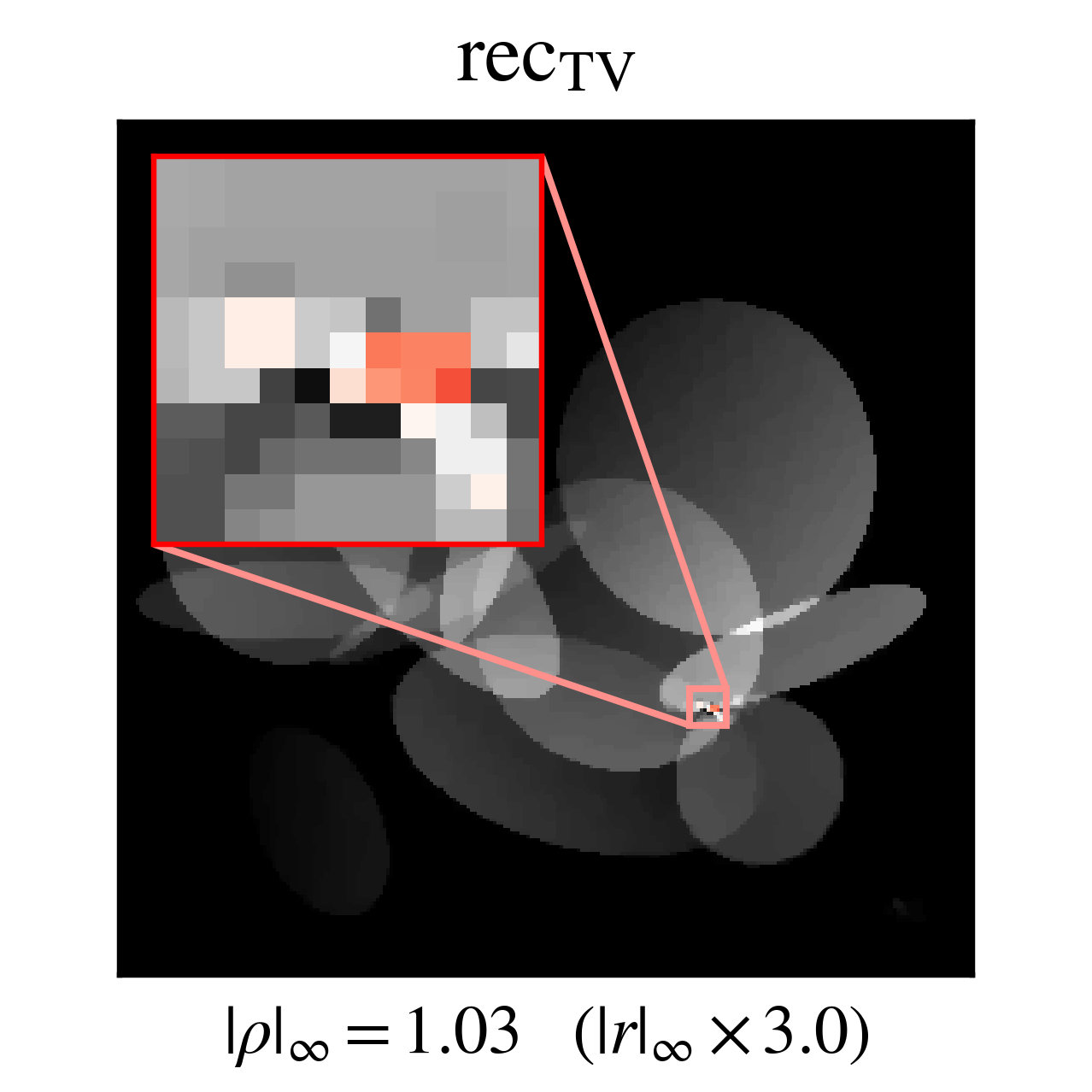}
\includegraphics[scale=0.58]{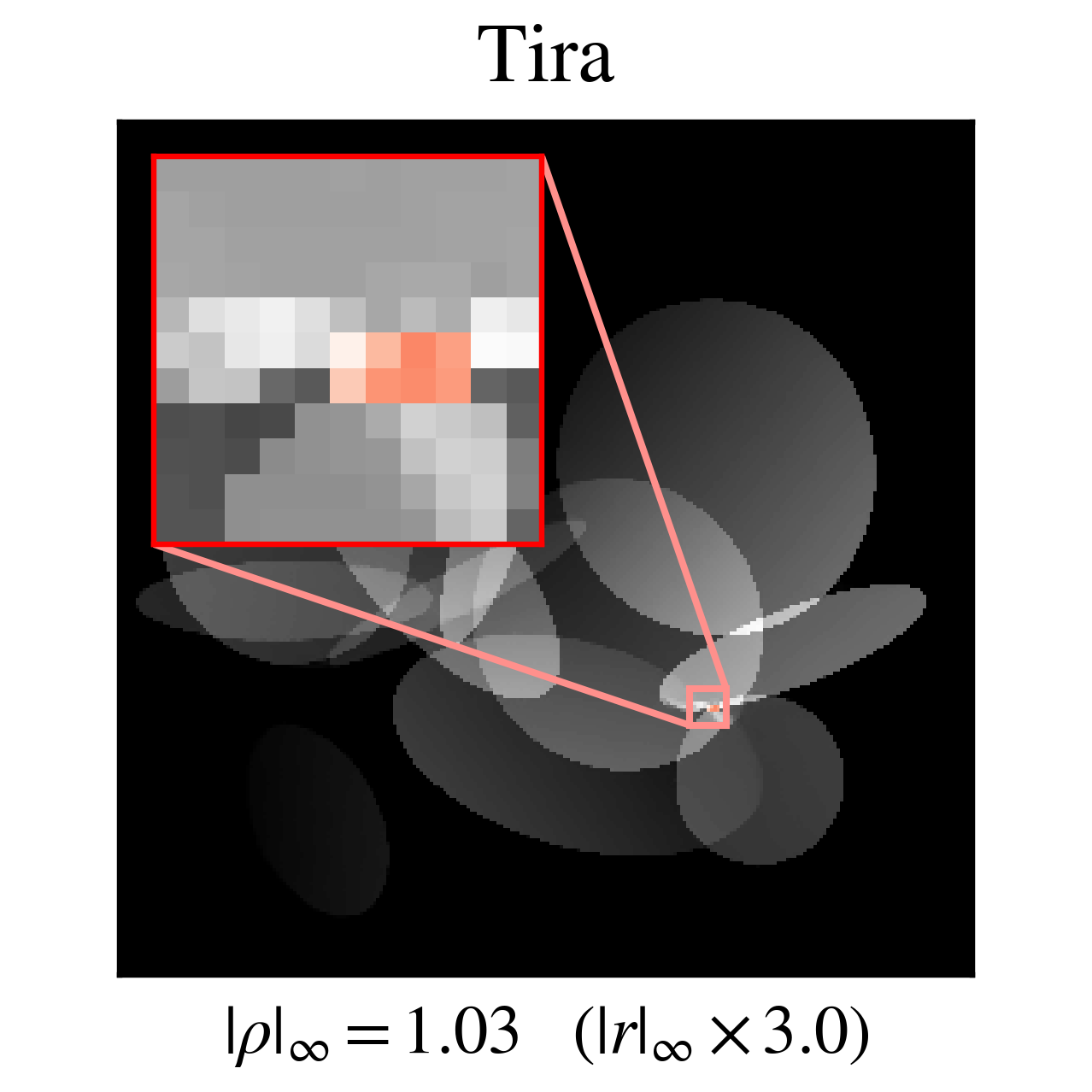}
\includegraphics[scale=0.58]{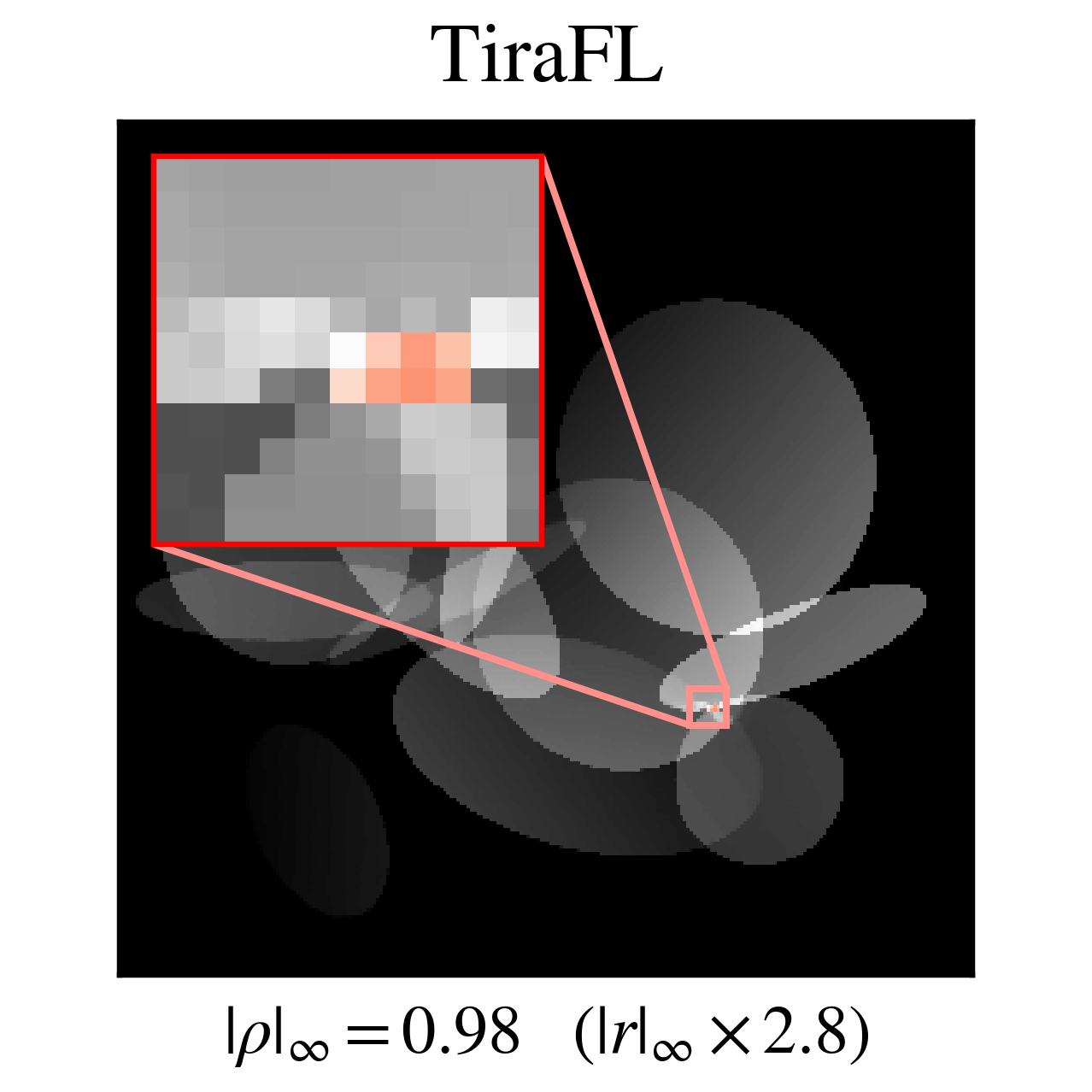}
\includegraphics[scale=0.58]{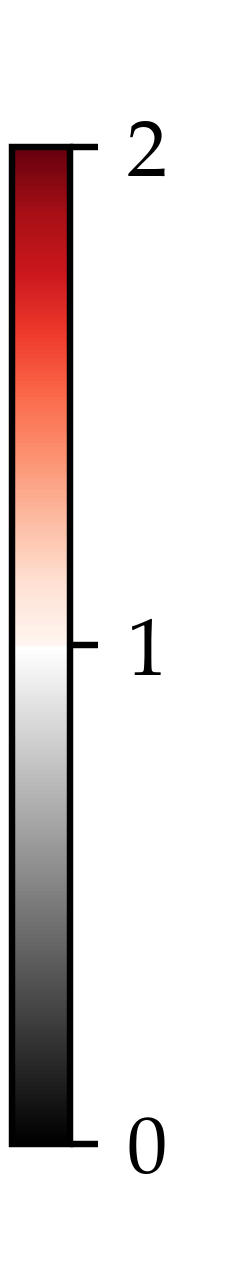}
\caption{
An adversarial perturbation is created with respect to \(\tira\) at a 4\% relative noise level, and then \(\recon\), \(\tira\), and \(\tirafl\) are used to reconstruct from the perturbed measurements.
All methods show similar reconstruction artifacts.
}
\label{fig:simple_transfer_tira_nr4p}
\end{figure}

\begin{figure}
\centering
\includegraphics[scale=0.58]{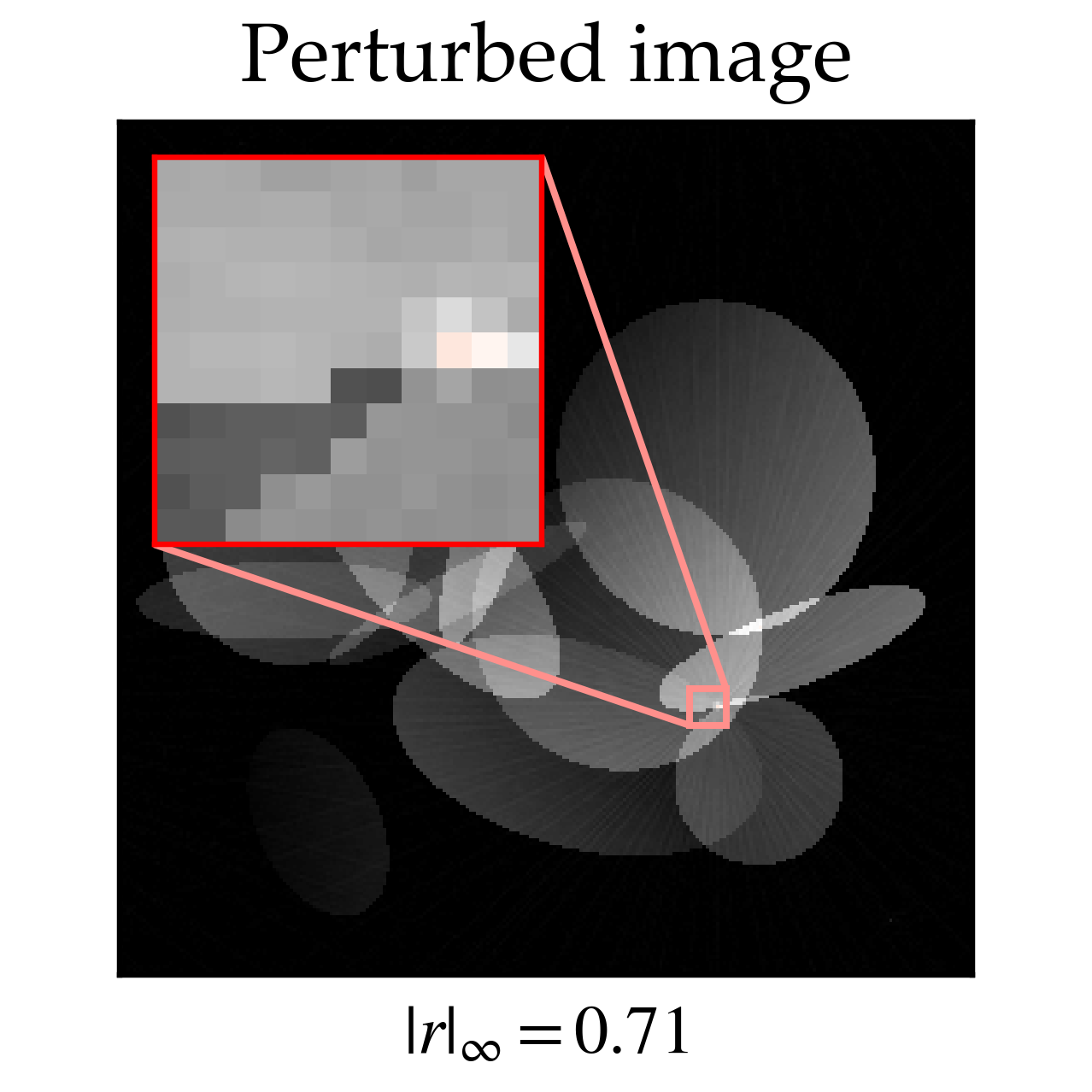}
\includegraphics[scale=0.58]{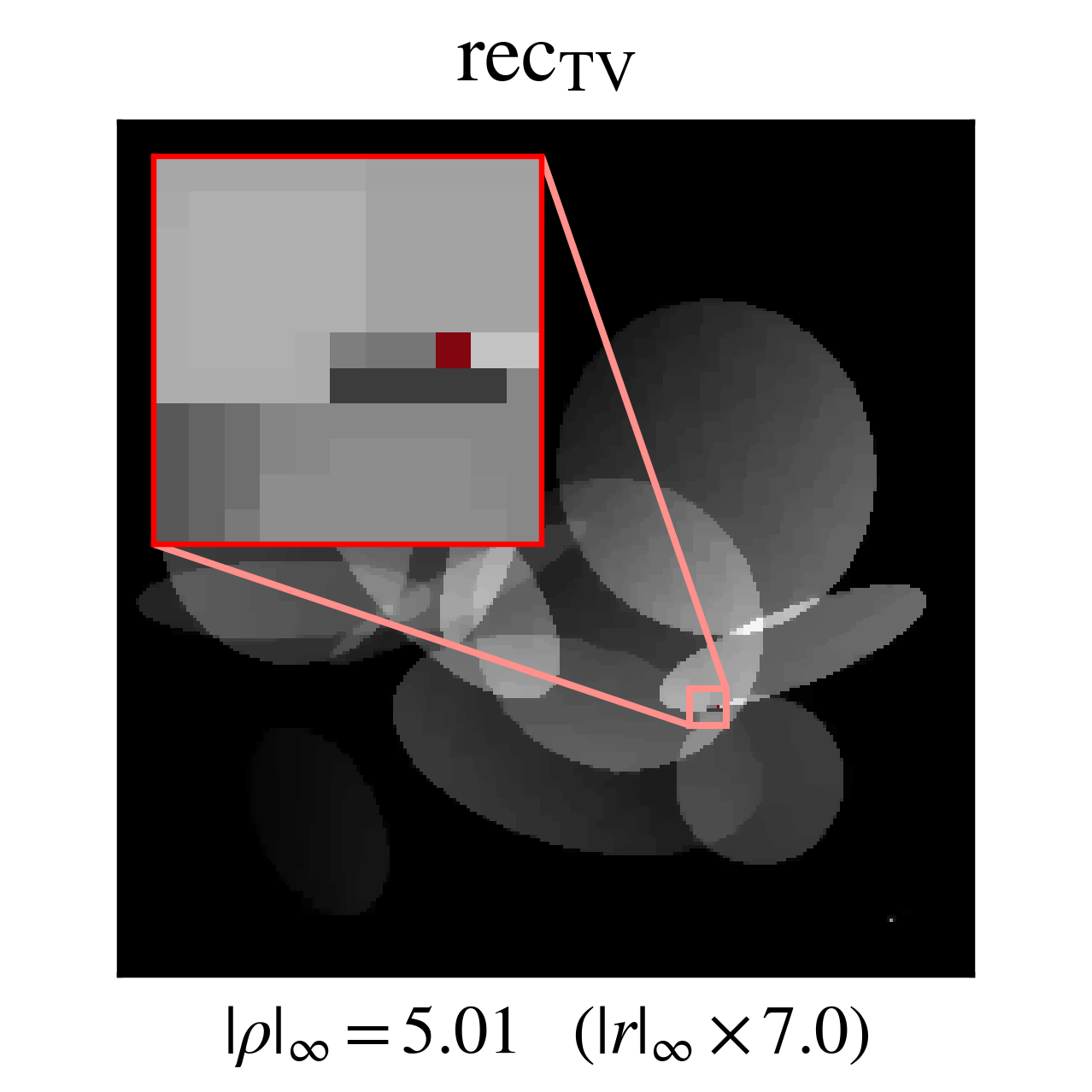}
\includegraphics[scale=0.58]{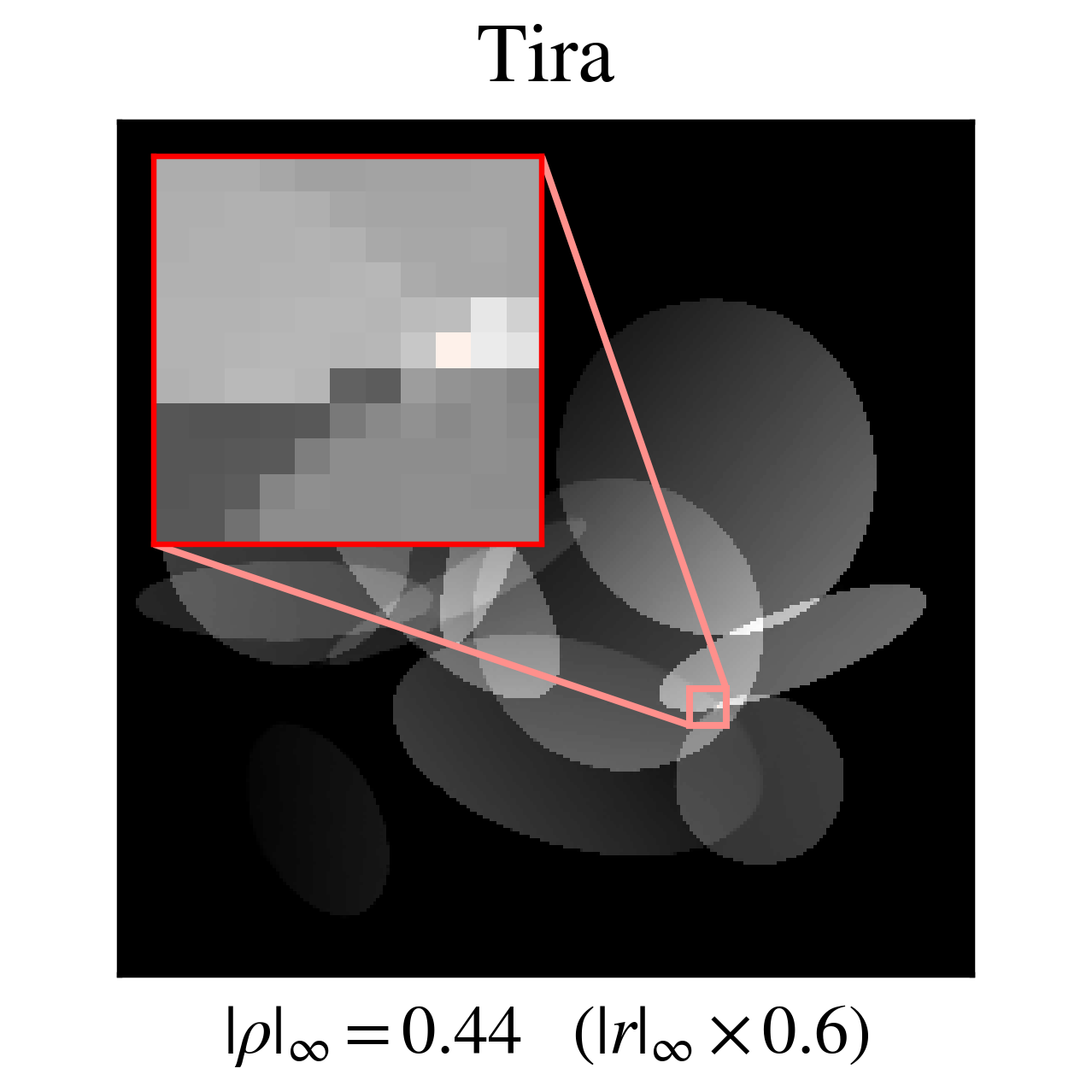}
\includegraphics[scale=0.58]{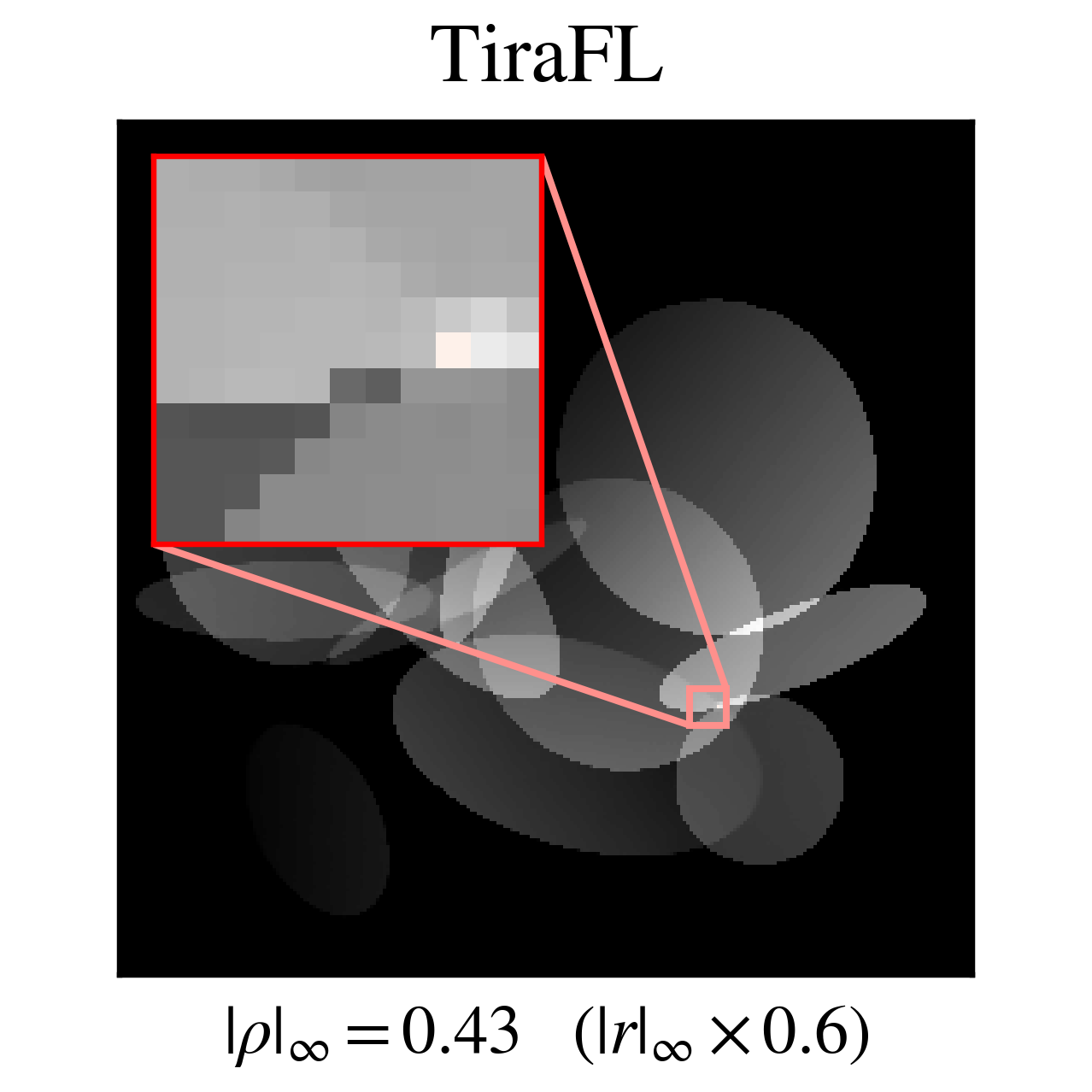}
\includegraphics[scale=0.58]{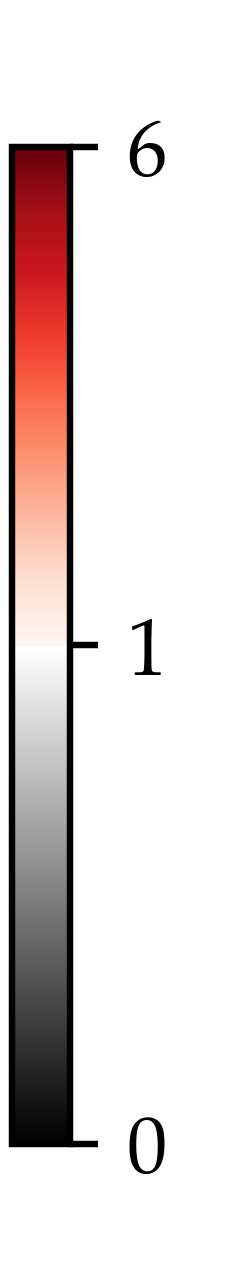}
\caption{Analogous to Figure \ref{fig:simple_transfer_tira_nr4p}, with the perturbation constructed with respect to \(\recon\).
The reconstruction artifact in the second image does not appear in the second and third images, and hence the perturbation is not transferable.
}
\label{fig:simple_transfer_tv_nr4p}
\end{figure}

\section{
Explaining localized artifacts}
\label{sec:maths}
Many results exist that guarantee exact recovery of sparse vectors from partial Fourier measurements (\ref{eq:noisymeasurements}), depending on properties of the underlying signal, the sampling mask \(\Omega\), and the reconstruction method.
The reconstruction artifacts seen in Figures \ref{fig:simple_signal_nr4p} and \ref{fig:simple_noise_lvls} consist mainly of a single pixel spike.
We now show how such exactness guarantees can imply the existence of low \(\ell^\infty\)-norm perturbations which give rise to precisely that type of artifact.

For simplicity, we consider the noiseless one-dimensional setting.
Let \(n\geq2\) be an even integer, and let \(\fourier_n\colon\C^n\to\C^n\) denote the one-dimensional discrete Fourier transform,
\begin{equation*}
    \fourier_n(z) =\left( \frac{1}{\sqrt{n}}\sum\limits_{j=0}^{n-1} z_j \expon{-2\pi\imag kj/n} \right)_{k=-n/2+1}^{n/2}
.\end{equation*}
Let \(\mask_n\subseteq\{-n/2+1, \ldots, n/2\}\) be a set of indices, and define the operator \(\mriop_n = \maskopn\circ\fourier_n\),
where \(\maskopn\) is the projection onto the index set \(\mask_n\).
Consider a signal \(x_n\in\C^n\).
We wish to perturb the measurement vector \(y_n = \mriop_n (x_n)\) such that a spike appears in the reconstruction.
Without loss of generality, let \(\delta_n=(1,0,\ldots,0)\in\C^n\) be our desired single spike artifact,
and simply define the perturbation
\begin{equation*}
    e_n = \mriop_n(\delta_n) = \left(1/\sqrt{n}\right)_{k\in\mask_n}
\end{equation*}
(which, incidentally, is an \(\ell^\infty\)-minimal perturbation given an \(\ell^2\)-budget).
The corresponding ``image perturbation'' is
\begin{equation*}
    r_n
    = \mriopinv_n (e_n)
    = \left( \frac{1}{n}
    \sum\limits_{k\in\mask_n}\expon{2\pi\imag kj/n}\right)_{j=0}^{n-1}
.\end{equation*}
If we can guarantee the exact recovery of \(x_n\) from \(\mriop_n x_n\) and of \(x_n+\delta_n\) from its measurements (see Theorems~\ref{thm:l1noiseless} and \ref{thm:tvnoiseless} below), 
\begin{equation*}
\mriop_n(x_n+\delta_n)=y_n + e_n=\mriop_n(x_n+r_n)
,
\end{equation*}
then a reconstruction artifact is created with an amplification factor \(\alpha_n=\norm{\delta_n}_\infty/\norm{r_n}_\infty=1/\norm{r_n}_\infty\).
The \(\ell^\infty\)-norm of \(r_n\) can easily be bounded from above,
\begin{equation*}
    \norm{r_n}_\infty =
    \max\limits_{j=0,\ldots,n-1}\abs{ \frac{1}{n} \sum\limits_{k\in\mask_n} \expon{2\pi\imag kj/n} }
    \leq
    \max\limits_{j=0,\ldots,n-1} \frac{1}{n} \sum\limits_{k\in\mask_n}\abs{\expon{2\pi\imag kj/n} }
    = \frac{m_n}{n}
,\end{equation*}
where \(m_n=\cardinality{\mask_n}\), so that
\[
\alpha_n\ge \frac{n}{m_n}.
\]
In other words, the amplification factor is at least as big as the subsampling factor, which is in accordance with Table~\ref{tab:quant}.
If the size of the sampling mask \(\mask_n\) grows slowly with the dimension \(n\),
then the amplification factor \(\alpha_n\) can be made arbitrarily large simply by increasing the resolution.
In fact,
since
\(
    \norm{r_n}_2=\norm{e_n}_2
    = \sqrt{m_n/n}
 \)
 and 
 \(
\norm{\delta_n}_2=1
,\)
the same is true for amplification in the \(\ell^2\)-norm, although the growth is at a slower rate.

For demonstration purposes, we now cite a known recovery guarantee for \(\ell^1\)-minimization,
which can then be translated to results on TV-minimization by a simple argument.
The number of measurements required depends on the sparsity of the signal, which we shall assume does not vary with the dimension.
This is true if \(x_n\) is a finite spike train and thus suitable for \(\ell^1\)-minimization.
We employ TV-minimization if \(x_n\) is a piecewise constant function, in which case the sparsity of \(\nabla x_n\) is constant with respect to \(n\).
Moreover, if \(x_n\) is \(s\)-sparse, then \(x_n+\delta_n\) is \((s+1)\)-sparse.
Similarly, if \(\nabla x_n\) is \(s\)-sparse, then \(\nabla(x_n+\delta_n)\) is \((s+2)\)-sparse (where \(\nabla\) is understood as the periodic finite difference operator).
Hence, exactness results hold for \(x_n+\delta_n\) as well as \(x_n\).

\begin{theorem}[{\cite[Theorem 1.1]{candes2011probabilistic}
as stated in
\cite[Theorem 12.20]{foucart2013invitation}}]
\label{thm:l1noiseless}
Let \(x\in\C^n\) be \(s\)-sparse, let \(\varepsilon>0\), and suppose that the indices of \(\mask_n\) are chosen uniformly at random from \(\{-n/2+1, \ldots, n/2\}\).
Then there exists a constant \(C>0\) such that if
\(m_n \geq Cs\log(n)\log(\varepsilon^{-1})\),
then \(x\) is the unique solution to
\begin{equation*}
	\label{eq:cs_noiseless_l1}
	\min\limits_{z\in\C^n}
	\norm{z}_1\ 
	\text{subject to}\ 
	\mriop_n z =  \mriop_n x,
\end{equation*}
with probability at least \(1-\varepsilon\).
\end{theorem}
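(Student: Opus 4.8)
The plan is to obtain this as an instance of the RIPless compressed-sensing theory of Cand\`es and Plan, by constructing an approximate dual certificate for $\ell^1$-minimization. Write $S=\operatorname{supp}(x)$ with $\cardinality{S}=s$, and regard $\mriop_n$ (after suitable normalization) as having rows that are independent copies of a random Fourier vector $a$ whose entries all share the same modulus. Because the frequency is drawn uniformly, this ensemble is \emph{isotropic}, $\mathbb{E}[aa^*]=I$, and \emph{maximally incoherent}. The deterministic optimality condition I would exploit is the standard one: $x$ is the unique minimizer of $\norm{z}_1$ subject to $\mriop_n z=\mriop_n x$ provided that (i) the restriction of $\mriop_n$ to the columns in $S$ is injective, and (ii) there exists $v$ in the range of $\mriop_n^*$ with $v_S=\operatorname{sgn}(x_S)$ and $\norm{v_{S^c}}_\infty<1$. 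The whole proof then reduces to verifying (i) and (ii) with probability at least $1-\varepsilon$ under the stated scaling of $m_n$.

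First I would control the local conditioning. Denoting by $\mriop_{n,S}$ the submatrix on the columns indexed by $S$, a matrix Bernstein inequality applied to the sum of $m_n$ independent rank-one terms shows that the normalized restricted Gram matrix satisfies $\norm{\tfrac{1}{m_n}\mriop_{n,S}^*\mriop_{n,S}-I}\leq \tfrac12$ with high probability once $m_n\gtrsim s\log n$. This yields injectivity on $S$ (condition (i)) together with a uniform bound on $\mriop_{n,S}^\dagger$; the latter is what lets an \emph{inexact} certificate — one with $\norm{v_S-\operatorname{sgn}(x_S)}_2$ merely small rather than exactly zero — be upgraded to exact recovery.

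The heart of the argument is the construction of $v$ by Gross's golfing scheme. I would split the $m_n$ measurements into $L$ disjoint batches and build $v$ iteratively: starting from the residual $q_0=\operatorname{sgn}(x_S)$, each batch supplies a correction computed from that batch alone, so that the residual contracts geometrically, $\norm{q_\ell}_2\leq 2^{-\ell}\sqrt{s}$, since each batch acts as an approximate identity on $S$. At the same time I would bound the off-support growth: the increment to $\norm{v_{S^c}}_\infty$ at step $\ell$ is controlled by $\norm{q_{\ell-1}}_2$ through a vector Bernstein estimate that uses incoherence, so the accumulated off-support mass is a geometric series summing to strictly less than $1$, giving (ii). Taking $L\asymp\log(\varepsilon^{-1})$ batches both drives the residual to zero and forces the total failure probability below $\varepsilon$; since each batch requires $\gtrsim s\log n$ measurements, the overall budget is $m_n\gtrsim s\log n\,\log(\varepsilon^{-1})$, as claimed.

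The main obstacle is the bookkeeping internal to the golfing scheme. One must choose the batch sizes so that, simultaneously at every iteration, the local-isometry event and the off-support concentration event both hold, while guaranteeing that the geometric decay of $\norm{q_\ell}_2$ outpaces the geometric accumulation of off-support mass so that the final bound stays below $1$. Balancing these two competing geometric series, and tracking the per-batch failure probabilities through a union bound so that their sum does not exceed $\varepsilon$, is where the precise constants must be managed and where the extra $\log(\varepsilon^{-1})$ factor is paid for. The probabilistic inputs themselves (matrix Bernstein for the restricted Gram matrix, vector Bernstein for the off-support terms) are by now routine, so the difficulty is organizational rather than conceptual.
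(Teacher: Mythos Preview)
The paper does not prove this theorem at all: it is quoted verbatim from \cite{candes2011probabilistic} (via \cite{foucart2013invitation}) and used as a black box to derive Theorem~\ref{thm:tvnoiseless}. There is therefore no ``paper's own proof'' to compare against.

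That said, your sketch is a faithful outline of the actual argument in the cited source. The RIPless framework of Cand\`es--Plan is precisely built on isotropy and incoherence of the sampling vectors, local conditioning of $\mriop_{n,S}$ via matrix Bernstein, and construction of an inexact dual certificate by Gross's golfing scheme, with the $\log(\varepsilon^{-1})$ factor arising from the number of golfing rounds. One minor point: in the original, the number of batches $L$ is chosen to scale with $\log n$ (to drive the residual small enough relative to the off-support bound), and the $\log(\varepsilon^{-1})$ factor enters through the per-batch failure probabilities and batch sizes rather than directly as the number of rounds; your attribution of the $\log(\varepsilon^{-1})$ factor solely to $L$ is slightly off, though the overall measurement count you arrive at is correct. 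But since the paper simply cites the result, any correct proof sketch is acceptable here, and yours captures the essential mechanism.
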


Keeping \(\varepsilon\) at a small but constant value and using the minimal number of measurements means that \(m_n\lesssim\log n\).
Applying Theorem \ref{thm:l1noiseless} to \(x=x_n+\delta_n\), we see that \(\ell^1\)-minimization combined with this uniformly random sampling scheme leads to amplification of
\begin{equation*}
    \alpha_n
    \gtrsim \frac{n}{\log n} \xrightarrow[n\to\infty]{}\infty,
\end{equation*}
with high probability.
In the same way, the corresponding TV-minimization result implies that TV-minimization for gradient sparse signals also leads to amplification of \(\alpha_n\gtrsim n/\log n\).
\begin{theorem}
    \label{thm:tvnoiseless}
    Let \(x\in\C^n\) be such that \(\nabla x\) is \(s\)-sparse, let \(\varepsilon>0\),
    and suppose that \(\mask_n=\{0\}\cup\mask'\) where \(\mask'\) contains \(m_n\) indices chosen uniformly at random from \(\{-n/2+1, \ldots, n/2\}\).
    Then there exists a constant \(C>0\) such that if
    \(m_n \geq Cs\log(n)\log(\varepsilon^{-1})\),
    then \(x\) is the unique solution to
    \begin{equation*}
    	\label{eq:cs_noiseless_tv}
    	\min\limits_{z\in\C^n}
    	\norm{\nabla z}_1\ 
    	\text{subject to}\ 
    	\mriop_n z =  \mriop_n x,
    \end{equation*}
    with probability at least \(1-\varepsilon\).
\end{theorem}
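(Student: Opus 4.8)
The plan is to reduce the total variation problem to the $\ell^1$ problem of Theorem~\ref{thm:l1noiseless} by the change of variables $u=\nabla x$, exploiting the fact that $\fourier_n$ diagonalizes the periodic finite difference operator. Concretely, there is a diagonal matrix $D_n$ (the Fourier multiplier of $\nabla$) with $\fourier_n(\nabla z)=D_n\,\fourier_n(z)$ for all $z\in\C^n$, whose $k$-th diagonal entry is $\expon{2\pi\imag k/n}-1$ (up to the chosen sign convention); this entry vanishes exactly at $k=0$ and is nonzero for every other $k\in\{-n/2+1,\dots,n/2\}$. Equivalently, $\ker\nabla$ consists of the constant vectors and the range of $\nabla$ is the set of mean-zero vectors. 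The single frequency $\{0\}$ that we adjoin to the random set $\mask'$ is precisely what fixes the otherwise undetermined constant, while the $m_n$ random frequencies in $\mask'$ will do the sparse recovery of $u$.

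First I would set $u=\nabla x$, which by hypothesis is $s$-sparse, and observe that $x$ is feasible for the TV problem. Next, take any competitor $z$ with $\mriop_n z=\mriop_n x$ and write $w=\nabla z$. For each $k\in\mask'$ the measurement constraint gives $\fourier_n(z)_k=\fourier_n(x)_k$, and multiplying by the $k$-th symbol yields $\fourier_n(w)_k=\fourier_n(u)_k$; hence $\mathrm{P}_{\mask'}\fourier_n w=\mathrm{P}_{\mask'}\fourier_n u$. In other words, $w$ is feasible for the $\ell^1$ recovery of the $s$-sparse vector $u$ from its partial Fourier samples on the uniformly random set $\mask'$. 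Since $\cardinality{\mask'}=m_n\geq Cs\log(n)\log(\varepsilon^{-1})$ and $\mask'$ is sampled uniformly at random, Theorem~\ref{thm:l1noiseless} applies verbatim to $u$: with probability at least $1-\varepsilon$, $u$ is the unique minimizer of $\norm{\,\cdot\,}_1$ over all vectors agreeing with $u$ in Fourier on $\mask'$.

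On this event I would conclude as follows. If $z$ is any TV-minimizer then $\norm{\nabla z}_1\leq\norm{\nabla x}_1=\norm{u}_1$, so $w=\nabla z$ is an $\ell^1$-feasible vector with norm no larger than that of $u$; by the uniqueness just established, $w=u$, i.e.\ $\nabla z=\nabla x$. Thus $z-x\in\ker\nabla$ is constant, and the remaining measurement at $k=0$ forces $\fourier_n(z)_0=\fourier_n(x)_0$, i.e.\ $z$ and $x$ have the same mean, so the constant is zero and $z=x$. This shows $x$ is the unique solution with probability at least $1-\varepsilon$.

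I expect the only real subtlety to be bookkeeping around the kernel and range of $\nabla$. The forward implication $\fourier_n(z)_k=\fourier_n(x)_k\Rightarrow\fourier_n(w)_k=\fourier_n(u)_k$ holds for every $k\in\mask'$ regardless of whether $0\in\mask'$, so no invertibility of the symbol is needed there. One should, however, check that applying Theorem~\ref{thm:l1noiseless} to $u$ is legitimate even though the TV problem implicitly restricts to mean-zero gradients: the extra range constraint $\fourier_n(w)_0=0$ only shrinks the feasible set, and since $u$ itself is mean-zero and is the unique $\ell^1$-minimizer over the larger set, it remains the unique minimizer over the smaller one. The role of the nonvanishing symbol for $k\neq0$ is merely to confirm that the information in the samples on $\mask'$ is the same whether phrased for $x$ or for $u$; it is the adjoined $k=0$ sample, not $\mask'$, that removes the constant ambiguity.
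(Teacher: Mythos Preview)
Your proposal is correct and follows essentially the same route as the paper: diagonalize the periodic difference operator via $\fourier_n$, reduce the TV problem to the $\ell^1$ problem of Theorem~\ref{thm:l1noiseless} for $u=\nabla x$, and use the adjoined frequency $k=0$ to kill the constant ambiguity. The only cosmetic differences are that the paper packages the reduction as a chain of inequalities between minimal values rather than arguing directly on a minimizer, and that you invoke Theorem~\ref{thm:l1noiseless} on the purely random set $\mask'$ while the paper invokes it on $\mask_n=\{0\}\cup\mask'$; your version is marginally cleaner in this respect.
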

\begin{proof}
    (The following argument can be used to translate results on recovery from Fourier measurements by \(\ell^1\)-minimization to TV-minimization and appears, for example, in \cite{candes2006robust}.)
    First, note that for any \(z\in\C^n\) and \(k=-n/2+1,\ldots,n/2\), we have
    \begin{align*}
        \left( \fourier_n \left( \nabla z\right)\right)_k
        & = \frac{1}{\sqrt{n}}\sum\limits_{j=0}^{n-1} \left( z_j - z_{(j-1) \bmod n} \right) \expon{-2\pi\imag kj/n}\\
        & = \frac{1}{\sqrt{n}}\sum\limits_{j=0}^{n-1} \left( z_j \right) \expon{-2\pi\imag kj/n}
        - \frac{1}{\sqrt{n}}\sum\limits_{j=0}^{n-1} \left( z_j \right) \expon{-2\pi\imag k(j+1)/n}\\
        & = \left( 1 - \expon{-2\pi\imag k/n} \right) \left( \fourier_n z\right)_k,
    \end{align*}
    and therefore
    \begin{equation*}
        \mriop_n z = \mriop_n x
        \quad
        \Longleftrightarrow
        \quad
        \mriop_n\left(\nabla z\right)
        = \mriop_n\left(\nabla x\right)
        \ \text{and}\ \mathbf{1}\cdot z = \mathbf{1}\cdot x
    \end{equation*}
    (where \(\mathbf{1}=(1,\ldots,1)\) and the condition \(\mathbf{1}\cdot z = \mathbf{1}\cdot x\) comes from the assumption that \(0\in\mask_n\)).
    
    By successively loosening the constraints, we see that
    \begin{align*}
        & 
    	\min\limits_{z\in\C^n}
    	\norm{\nabla z}_1\
        	\text{subject to}\ 
        	\mriop_n  z  =  \mriop_n  x\\
        =&
    	\min\limits_{z\in\C^n}
    	\norm{\nabla z}_1\
        	\text{subject to}\ 
        	\mriop_n \left( \nabla z \right) =  \mriop_n \left( \nabla x\right)
        \ \text{and}\ \mathbf{1}\cdot z = \mathbf{1}\cdot x\\
    	\geq &
    	\min\limits_{z\in\C^n}
    	\norm{\nabla z}_1\
        	\text{subject to}\ 
        	\mriop_n \left( \nabla z \right) =  \mriop_n \left( \nabla x\right)
        	\\
        \geq &
    	\min\limits_{z\in\C^n}
    	\norm{z}_1\ 
        	\text{subject to}\ 
        	\mriop_n z =  \mriop_n \left( \nabla x\right).
    \end{align*}
    By Theorem \ref{thm:l1noiseless}, \(\nabla x\) is the unique solution to the last of these minimization problems, and therefore \(x\) also solves the first one.
    Uniqueness follows from the fact that the only \(z\) which satisfies both \(\nabla z=\nabla x\) and \(\mathbf{1}\cdot z = \mathbf{1}\cdot x\) is \(z=x\).
\end{proof}

Of course, other sampling schemes exist, and if additional information on the signal is utilized, this result can be improved upon.
For example, if the minimal distance between two nonzero entries of \(x_n+\delta_n\) is greater than \(2n/c\) for some positive integer \(c\),
then \(\ell^1\)-minimization recovers \(x_n+\delta_n\) exactly as long as
the frequencies \(\{-c,c+1,\ldots,c\}\) are included in the sampling mask \(\mask_n\) \cite{candes2014towards}.
Likewise, if \(x_n+\delta_n\) is nonnegative and \(s\)-sparse, then it suffices that \(\{-s,s+1,\ldots,s\}\subseteq\mask_n\) \cite{de2012exact}.
In these cases \(m_n\) is constant, and therefore \(\alpha_n \gtrsim n\).
However, we cannot apply these specialized results to our example with TV-minimization, since \(\nabla(x_n+\delta_n)\) is neither nonnegative nor well separated.
It is worth mentioning the main result of \cite{poon2015role} on TV-minimization, which uses a random sampling mask that concentrates on the lower frequencies with \(m_n \gtrsim \log(n)\).

The single pixel reconstruction artifact may not be considered a meaningful feature and can be disregarded by a practitioner.
However, the artifacts seen in the experiments are more diverse and act as a proof of concept for localized adversarial artifacts for TV-regularization.
Other criteria for artifact severity may lead to different results.
The spike \(\delta_n\) is only an idealization of the observed artifacts which maximizes the \(\ell^\infty\)-norm within a fixed \(\ell^2\)-budget for the perturbation.
A similar argument can be made for the existence of different types of sparse artifacts but with a lower \(\ell^\infty\)-amplification factor.
For any artifact \(\rho_n\in\C^n\), the corresponding image perturbation can be bounded by the triangle inequality
\begin{equation*}
    \norm{r_n}_\infty=
    \norm{\mriopinv_n \mriop_n(\rho_n)}_\infty \leq \norm{\rho_n}_1\frac{m_n}{n}
    ,
\end{equation*}
so that an amplification of
\begin{equation*}
    \alpha_n
    \geq
    \frac{
    \norm{\rho_n}_\infty
    }{
    \norm{\rho_n}_1
    }
    \frac{n}{m_n}
\end{equation*}
can be attained.
Depending on the sparsity properties of \(\rho_n\), exact recovery of \(x_n+\rho_n\) can be guaranteed by the theorems above.
For example, if \(\rho_n=(1,\ldots,1,0,\ldots,0)\) is a discrete rectangular function, then \(\nabla\rho_n\) is 2-sparse and Theorem \ref{thm:tvnoiseless} can be applied.

\section*{Acknowledgments}
This material is based upon work supported by the Air Force Office of Scientific Research under award number FA8655-20-1-7027. GSA is a member  of the ``Gruppo Nazionale per l'Analisi Matematica, la Probabilità e le loro Applicazioni'', of the ``Istituto Nazionale di Alta Matematica''.

\bibliography{ref.bib}

\begin{thebibliography}{25}
\providecommand{\natexlab}[1]{#1}
\providecommand{\url}[1]{\texttt{#1}}
\expandafter\ifx\csname urlstyle\endcsname\relax
  \providecommand{\doi}[1]{doi: #1}\else
  \providecommand{\doi}{doi: \begingroup \urlstyle{rm}\Url}\fi

\bibitem[Alaifari et~al.(2019)Alaifari, Alberti, and
  Gauksson]{alaifari2019adef}
R.~Alaifari, G.~S. Alberti, and T.~Gauksson.
\newblock {ADef: an iterative algorithm to construct adversarial deformations}.
\newblock In \emph{{Proceedings of the International Conference on Learning
  Representations (ICLR 2019)}}, 2019.

\bibitem[Antun et~al.(2020)Antun, Renna, Poon, Adcock, and
  Hansen]{antun2020instabilities}
V.~Antun, F.~Renna, C.~Poon, B.~Adcock, and A.~C. Hansen.
\newblock {On instabilities of deep learning in image reconstruction and the
  potential costs of AI}.
\newblock \emph{{Proceedings of the National Academy of Sciences}},
  117\penalty0 (48):\penalty0 30088--30095, 2020.

\bibitem[Arridge et~al.(2019)Arridge, Maass, {\"O}ktem, and
  Sch{\"o}nlieb]{AMO19}
S.~Arridge, P.~Maass, O.~{\"O}ktem, and C.-B. Sch{\"o}nlieb.
\newblock Solving inverse problems using data-driven models.
\newblock \emph{Acta Numerica}, 28:\penalty0 1--174, 2019.

\bibitem[Biggio et~al.(2013)Biggio, Corona, Maiorca, Nelson, {\v{S}}rndi{\'c},
  Laskov, Giacinto, and Roli]{biggio2013evasion}
B.~Biggio, I.~Corona, D.~Maiorca, B.~Nelson, N.~{\v{S}}rndi{\'c}, P.~Laskov,
  G.~Giacinto, and F.~Roli.
\newblock {Evasion attacks against machine learning at test time}.
\newblock In \emph{{Joint European Conference on Machine Learning and Knowledge
  Discovery in Databases}}, pages 387--402, New York, 2013. Springer.

\bibitem[Boyd et~al.(2011)Boyd, Parikh, Chu, Peleato, Eckstein,
  et~al.]{boyd2011distributed}
S.~Boyd, N.~Parikh, E.~Chu, B.~Peleato, J.~Eckstein, et~al.
\newblock {Distributed optimization and statistical learning via the
  alternating direction method of multipliers}.
\newblock \emph{{Foundations and Trends in Machine learning}}, 3\penalty0
  (1):\penalty0 1--122, 2011.

\bibitem[Cand{\`e}s and Fernandez-Granda(2014)]{candes2014towards}
E.~J. Cand{\`e}s and C.~Fernandez-Granda.
\newblock {Towards a mathematical theory of super-resolution}.
\newblock \emph{{Communications on pure and applied Mathematics}}, 67\penalty0
  (6):\penalty0 906--956, 2014.

\bibitem[Cand{\`e}s and Plan(2011)]{candes2011probabilistic}
E.~J. Cand{\`e}s and Y.~Plan.
\newblock {A probabilistic and RIPless theory of compressed sensing}.
\newblock \emph{{IEEE transactions on information theory}}, 57\penalty0
  (11):\penalty0 7235--7254, 2011.

\bibitem[Cand{\`e}s et~al.(2006)Cand{\`e}s, Romberg, and Tao]{candes2006robust}
E.~J. Cand{\`e}s, J.~Romberg, and T.~Tao.
\newblock {Robust uncertainty principles: Exact signal reconstruction from
  highly incomplete frequency information}.
\newblock \emph{{IEEE Transactions on information theory}}, 52\penalty0
  (2):\penalty0 489--509, 2006.

\bibitem[Chen et~al.(2022)Chen, Sch{\"o}nlieb, Li{\`o}, Leiner, Dragotti, Wang,
  Rueckert, Firmin, and Yang]{chen2022ai}
Y.~Chen, C.-B. Sch{\"o}nlieb, P.~Li{\`o}, T.~Leiner, P.~L. Dragotti, G.~Wang,
  D.~Rueckert, D.~Firmin, and G.~Yang.
\newblock {AI-based reconstruction for fast MRI—a systematic review and
  meta-analysis}.
\newblock \emph{{Proceedings of the IEEE}}, 110\penalty0 (2):\penalty0
  224--245, 2022.

\bibitem[Colbrook et~al.(2022)Colbrook, Antun, and
  Hansen]{colbrook2022difficulty}
M.~J. Colbrook, V.~Antun, and A.~C. Hansen.
\newblock The difficulty of computing stable and accurate neural networks: On
  the barriers of deep learning and {S}male’s 18th problem.
\newblock \emph{Proceedings of the National Academy of Sciences}, 119\penalty0
  (12):\penalty0 e2107151119, 2022.

\bibitem[De~Castro and Gamboa(2012)]{de2012exact}
Y.~De~Castro and F.~Gamboa.
\newblock {Exact reconstruction using Beurling minimal extrapolation}.
\newblock \emph{{Journal of Mathematical Analysis and applications}},
  395\penalty0 (1):\penalty0 336--354, 2012.

\bibitem[Foucart and Rauhut(2013)]{foucart2013invitation}
S.~Foucart and H.~Rauhut.
\newblock \emph{{A Mathematical Introduction to Compressive Sensing}}.
\newblock Springer New York, 2013.

\bibitem[Genzel et~al.(2022)Genzel, Macdonald, and M\"arz]{genzel2020solving}
M.~Genzel, J.~Macdonald, and M.~M\"arz.
\newblock {Solving inverse problems with deep neural networks - robustness
  included}.
\newblock \emph{{IEEE Transactions on Pattern Analysis and Machine
  Intelligence}}, 45\penalty0 (1):\penalty0 1119--1134, 2022.
\newblock \doi{10.1109/TPAMI.2022.3148324}.

\bibitem[Goodfellow et~al.(2014)Goodfellow, Shlens, and
  Szegedy]{goodfellow2014explaining}
I.~J. Goodfellow, J.~Shlens, and C.~Szegedy.
\newblock {Explaining and harnessing adversarial examples}.
\newblock \emph{{arXiv preprint arXiv:1412.6572}}, 2014.

\bibitem[Gottschling et~al.(2020)Gottschling, Antun, Adcock, and
  Hansen]{gottschling2020troublesome}
N.~M. Gottschling, V.~Antun, B.~Adcock, and A.~C. Hansen.
\newblock {The troublesome kernel: why deep learning for inverse problems is
  typically unstable}.
\newblock \emph{arXiv preprint arXiv:2001.01258}, 2020.

\bibitem[J{\'e}gou et~al.(2017)J{\'e}gou, Drozdzal, Vazquez, Romero, and
  Bengio]{jegou2017one}
S.~J{\'e}gou, M.~Drozdzal, D.~Vazquez, A.~Romero, and Y.~Bengio.
\newblock {The one hundred layers {T}iramisu: Fully convolutional densenets for
  semantic segmentation}.
\newblock In \emph{{Proceedings of the IEEE Conference on Computer Vision and
  Pattern Recognition Workshops}}, pages 11--19, 2017.

\bibitem[Lustig et~al.(2008)Lustig, Donoho, Santos, and Pauly]{lu2008csmri}
M.~Lustig, D.~L. Donoho, J.~M. Santos, and J.~M. Pauly.
\newblock {Compressed Sensing MRI}.
\newblock \emph{{IEEE Signal Processing Magazine}}, 25\penalty0 (2):\penalty0
  72--82, 2008.
\newblock \doi{10.1109/MSP.2007.914728}.

\bibitem[Madry et~al.(2017)Madry, Makelov, Schmidt, Tsipras, and
  Vladu]{madry2017towards}
A.~Madry, A.~Makelov, L.~Schmidt, D.~Tsipras, and A.~Vladu.
\newblock {Towards deep learning models resistant to adversarial attacks}.
\newblock \emph{{arXiv preprint arXiv:1706.06083}}, 2017.

\bibitem[McCann et~al.(2017)McCann, Jin, and Unser]{mccann2017review}
M.~T. McCann, K.~H. Jin, and M.~Unser.
\newblock {Convolutional neural networks for inverse problems in imaging: A
  review}.
\newblock \emph{{{IEEE} Signal Processing Magazine}}, 34\penalty0 (6):\penalty0
  85--95, 2017.
\newblock \doi{10.1109/msp.2017.2739299}.
\newblock URL \url{https://doi.org/10.1109%2Fmsp.2017.2739299}.

\bibitem[Ongie et~al.(2020)Ongie, Jalal, Metzler, Baraniuk, Dimakis, and
  Willett]{ongie2020deep}
G.~Ongie, A.~Jalal, C.~A. Metzler, R.~G. Baraniuk, A.~G. Dimakis, and
  R.~Willett.
\newblock {Deep learning techniques for inverse problems in imaging}.
\newblock \emph{{IEEE Journal on Selected Areas in Information Theory}},
  1\penalty0 (1):\penalty0 39--56, 2020.

\bibitem[Poon(2015)]{poon2015role}
C.~Poon.
\newblock {On the role of total variation in compressed sensing}.
\newblock \emph{{SIAM Journal on Imaging Sciences}}, 8\penalty0 (1):\penalty0
  682--720, 2015.

\bibitem[Szegedy et~al.(2013)Szegedy, Zaremba, Sutskever, Bruna, Erhan,
  Goodfellow, and Fergus]{szegedy2013intriguing}
C.~Szegedy, W.~Zaremba, I.~Sutskever, J.~Bruna, D.~Erhan, I.~Goodfellow, and
  R.~Fergus.
\newblock {Intriguing properties of neural networks}.
\newblock \emph{{arXiv preprint arXiv:1312.6199}}, 2013.

\bibitem[Wang et~al.(2016)Wang, Su, Ying, Peng, Zhu, Liang, Feng, and
  Liang]{wang2016accelerating}
S.~Wang, Z.~Su, L.~Ying, X.~Peng, S.~Zhu, F.~Liang, D.~Feng, and D.~Liang.
\newblock {Accelerating magnetic resonance imaging via deep learning}.
\newblock In \emph{{Proceedings of the 2016 IEEE 13th International Symposium
  on Biomedical Imaging}}, pages 514--517, 2016.

\bibitem[Yang et~al.(2016)Yang, Sun, Li, and Xu]{sun2016deep}
Y.~Yang, J.~Sun, H.~Li, and Z.~Xu.
\newblock {Deep ADMM-Net for Compressive Sensing MRI}.
\newblock In \emph{Advances in Neural Information Processing Systems},
  volume~29, 2016.

\bibitem[Zhu et~al.(2018)Zhu, Liu, Cauley, Rosen, and Rosen]{zhu2018image}
B.~Zhu, J.~Z. Liu, S.~F. Cauley, B.~R. Rosen, and M.~S. Rosen.
\newblock {Image reconstruction by domain-transform manifold learning}.
\newblock \emph{{Nature}}, 555\penalty0 (7697):\penalty0 487--492, 2018.

\end{thebibliography}
\end{document}